\documentclass[aps
,pra
,twocolumn
,letterpaper
,10pt
,tightenlines
,superscriptaddress
% ,nobalancelastpage
,nofootinbib
,tightenlines
,longbibliography
,notitlepage,floatfix]{revtex4-2}

%%%%%%%%%%%%%%%% PACKAGES %%%%%%%%%%%%%%%%
% \usepackage{silence}

% Adjusts typography rules for the chosen language. Enables \selectlanguage{}
\usepackage[english]{babel} %language package
% E.g., 
% Hyphenation patterns
% Date formats (\today)
% Language-specific typographic conventions
% Sometimes, the output bbl file can contain \selectlanguage

%%%%%%%%%%% Provides a table environment that automatically adjusts column widths so that the table fills the given width (usually \linewidth) while still respecting content.
\usepackage{tabulary}

%%%%%%%%%%% more control than the built-in \today
\usepackage{datetime}
% E.g., If you want another date format (DD-MM-YYYY), you can redefine \today manually:
% \renewcommand{\today}{\number\day-\number\month-\number\year}

%%%%%%%%%%% Fore citing references \cite{}
%\usepackage{cite} 
% REVTeX redefines \cite to handle author–year and numerical styles with special punctuation rules.

%%%%%%%%%%% A bunch of math packages
\usepackage{mathtools,amssymb,amsthm,nccmath,dsfont,physics,bm,stmaryrd} 
% mathtools defines \coloneqq etc., and also loads an extended version of the amsmath package (\align etc.). (Revetex loads amsmath also, but that's fine.) 
% amsymb loads amsfonts package and allows for \mathbb{Z}, \leqeslant, \nexists etc.
% amsthm defines theorem environments, e.g., \begin{theorem}
% nccmath allows fine control of math environment sizing/spacing (\begin{fleqn}, \begin{myequation}, etc.).
% dsfont provides \mathds{x}
% physics provides high-level macros: \qty, \dv, \pdv, \comm, \abs, \bra, \ket, etc. (Note, this means the braket package isn't needed, although they have slightly different spacings)
% bm allows bolding more math font e.g., \bm{\alpha} works but \mathbf{\alpha} doesn't
% stmaryrd adds extra mathematical symbols e.g., ⟦ ⟧, ⦅ ⦆, \mapsfrom.

% This prevents bold math errors from occurring when trying to bold stmaryrd symbols. (I believe it just doesn't bold them and keeps the original font.)
\SetSymbolFont{stmry}{bold}{U}{stmry}{m}{n}

%%%%%%%%%%% For including external images, and drawing lines, circles + creating diagrams, etc., respectively
\usepackage{xcolor,pict2e,tikz} 
% graphics was replaced by graphicx
% xcolor gives red, green, blue, cyan, magenta, yellow, black, gray, brown, lime, olive, orange, purple, violet, teal, pink, white, lightgray, darkgray

%%%%%%%%%%% For custom float rules and gives [H]. Not really needed, especially with RevTex
% \usepackage{float} 

%%%%%%%%%%% For displaying code environments (use \begin{verbatim} )
\usepackage{verbatim}

%%%%%%%%%%% For some reason, this rotates doc 90 degrees with revtex-2??
% \usepackage{pdfpages} 

%%%%%%%%%%% For some reason, this rotates doc 90 degrees with revtex-2??
% \usepackage{pdflscape} 

%%%%%%%%%%% Spacing types
\usepackage{setspace} 
% E.g., 
% \singlespacing    % standard
% \onehalfspacing   % 1.5 spacing
% \doublespacing    % double spacing
% or even \begin{spacing[x]} where x is the spacing size

%%%%%%%%%%% For lists etc.
\usepackage[inline]{enumitem}
% The enumitem package gives you fine-grained control over list formatting, including:
% Label format: numbers, letters, symbols, custom text
% Indentation and spacing: margins, label widths, space before/after item
% List depth: more than 4 levels if needed
% Inline lists: items placed side-by-side in a paragraph (instead of vertically)

%%%%%%%%%%% Changes font size relative to the current size, instead of jumping to absolute LaTeX size commands
\usepackage{relsize} 
% The basic command of the package is \relsize, whose argument is a number of \magsteps to change size; from this are defined commands \larger, \smaller, \textlarger, etc.
% Examples:
% \relsize{+1}   % one step larger than current
% \relsize{-2}   % two steps smaller than current
% \larger        % shortcut for \relsize{+1}
% \smaller       % shortcut for \relsize{-1}

% \usepackage{caption} %customise caption settings package
% \captionsetup{justification=raggedright,singlelinecheck=false}

%%%%%%%%%%% It lets you wrap content (text, figures, tables) in a command or environment and adjust it in various ways
\usepackage{adjustbox}
% E.g., 
% Scaling (scale=..., max width=..., max height=...)
% Rotation (rotate=...)
% Frame/border (frame, cfbox=...)
% Trimming/clipping (trim=..., clip)
% Alignment (centre, left, right)

%%%%%%%%%%% For subfigures (Need to check if this is even useful)
% \usepackage[labelsep=quad,indention=10pt]{subfig}
% \captionsetup*[subfigure]{position=bottom}

%%%%%%%%%%% Allows the appendix section titles to appear in the table of contents
\usepackage[title]{appendix}

%%%%%%%%%%% Allows dummy text to be used such by using \lipsum or \lipsum[1-2] (unsure what [x-y] does.
\usepackage{lipsum} 

%%%%%%%%%%% Adds more characters, such as Sha from Cyrillic.
\usepackage[OT2,T1]{fontenc}
\DeclareSymbolFont{cyrletters}{OT2}{wncyr}{m}{n}
\DeclareMathSymbol{\Sha}{\mathalpha}{cyrletters}{"58}

%%%%%%%%%%% Hyperlinks citations and table of contents and colours them according to \hypersetup below
\usepackage{hyperref}
\hypersetup{
    colorlinks,
    linkcolor={blue!100!black},
    citecolor={blue!100!black},
    urlcolor={blue!100!black},
}

%%%%%%%%%%%%%%%% New subsubsubsection package %%%%%%%%%%%%%%%%
% \usepackage{titlesec}
% \titleclass{\subsubsubsection}{straight}[\subsection]

% \newcounter{subsubsubsection}[subsubsection]
% \renewcommand\thesubsubsubsection{\thesubsubsection.\arabic{subsubsubsection}}
% \renewcommand\theparagraph{\thesubsubsubsection.\arabic{paragraph}} % optional; useful if paragraphs are to be numbered

% \titleformat{\subsubsubsection}
%   {\normalfont\normalsize\bfseries}{\thesubsubsubsection}{1em}{}
  
% \titlespacing*{\subsubsubsection}
% {0pt}{3.25ex plus 1ex minus .2ex}{1.5ex plus .2ex}

%%%%%%%%%%%%%%%%
\usepackage{cleveref}

%%%%%%%%%%%This code below customises theorem numbering
% Styles
\theoremstyle{plain}
\newtheorem{theorem}{Theorem}             % 1,2,3,...
        % shares theorem counter

\newtheorem{corollary}{Corollary}

\theoremstyle{definition}
       % independent counter 1,2,3,...
\newtheorem*{definition*}{Definition} % No counter

\theoremstyle{remark}
               % its own counter

% --- cleveref names ---
\crefname{theorem}{theorem}{theorems}
\Crefname{theorem}{Theorem}{Theorems}

\crefname{lemma}{lemma}{lemmas}
\Crefname{lemma}{Lemma}{Lemmas}

\crefname{proposition}{proposition}{propositions}
\Crefname{proposition}{Proposition}{Propositions}

\crefname{corollary}{corollary}{corollaries}
\Crefname{corollary}{Corollary}{Corollaries}

\crefname{definition}{definition}{definitions}
\Crefname{definition}{Definition}{Definitions}

\crefname{remark}{remark}{remarks}
\Crefname{remark}{Remark}{Remarks}

% Optional (if you ever want numbering by section yet still independent):
% change the two base declarations to
% \newtheorem{theorem}{Theorem}[section] and \newtheorem{definition}{Definition}[section].

%%%%%%%%%%%%%%%% CUSTOM COMMANDS %%%%%%%%%%%%%%%%
 % big chi

 % floor function

 % ceiling function

 % floor on left, ceiling on right

 % remainder notation

 % shifted remainder notation

 % creates shifteFrac to be the same as remainderShifted

% variable-sized, shifted floor and ceiling

% \newcommand{\nearIntShifted}[3]{\fleil{#1}_{#2}^{#3}} %shifted integer function
 %shifted integer function (just using floor)

% RELABEL TO titlemath at some point!!
% \newcommand{\titmath}[1]{\texorpdfstring{$#1$}{TEXT}} % OLD
% This allows you to put math inside of titles i.e., \titmath{1+1=2}

% \newcommand{\negone}{(-1)}

% This line sets the project root file.
% 
% !TEX root = wigner_ratios.tex
% 
%------------------------------------------------------------------------------------------------------------%

%LUCKY ADDITIONS

% NOTES
%
% 1) DO GLOBAL REPLACE "single_document" WITH NAME OF MAIN TEX FILE

% FIX SPACING WITH \left AND \right
\let\originalleft\left
\let\originalright\right
\renewcommand{\left}{\mathopen{}\mathclose\bgroup\originalleft}
\renewcommand{\right}{\aftergroup\egroup\originalright}

% \theoremstyle{plain}
% \newtheorem{theorem}{Theorem}
% \newtheorem{lemma}[theorem]{Lemma}
% \newtheorem{proposition}[theorem]{Proposition}
% \newtheorem{corollary}[theorem]{Corollary}

% \theoremstyle{definition}
% \newtheorem*{definition}{Definition}

% \theoremstyle{remark}
% %\theoremstyle{definition}
% \newtheorem*{notation}{Note on notational conventions}

% \DeclareMathOperator{\tr}{tr}
% \DeclareMathOperator{\Tr}{Tr}

% \DeclareMathOperator{\var}{var}

\renewcommand{\Re}{\operatorname{Re}}

% \DeclareMathOperator{\sech}{sech}
% \DeclareMathOperator{\csch}{csch}

% \newcommand{\diffd}{\mathrm{d}}

%%%%%%%%%%%%%%%%%%%%%%%%%%%%%%%%%%
\makeatletter

\newcommand{\ringplus}{\mathbin{\text{\@ringplus}}}

\newcommand{\@ringplus}{%
  \ooalign{\hidewidth\raise1.3ex\hbox{\tiny$\circ$}\hidewidth\cr$\m@th+$\cr}%
}

\newcommand{\ringminus}{\mathbin{\text{\@ringminus}}}

\newcommand{\@ringminus}{%
  \ooalign{\hidewidth\raise0.9ex\hbox{\tiny$\circ$}\hidewidth\cr$\m@th-$\cr}%
}
\makeatother
%%%%%%%%%%%%%%%%%%%%%%%%%%%%%%%%%%

%\newcommand{\dd}{\mathrm{d}}

\newcommand{\bvec}[1]{\bm{#1}}

\newcommand{\tp}[0]{\mathrm{T}}

\DeclareFontFamily{U}{wncy}{}
\DeclareFontShape{U}{wncy}{m}{n}{<->wncyr10}{}
\DeclareSymbolFont{mcy}{U}{wncy}{m}{n}
\DeclareMathSymbol{\Sh}{\mathord}{mcy}{"58}

\newcommand{\negspace}{\!}
\newcommand{\lsub}[2]{{\protect\vphantom{#1}}_{#2} \negspace {#1}}
\newcommand{\rsub}[2]{{#1} \negspace {\protect\vphantom{#1}}_{#2}}

\newcommand{\ketsub}[2]{\rsub {\ket{#1}} {#2}}
\newcommand{\brasub}[2]{\lsub {\bra{#1}} {#2}}

\newcommand{\qket}[1]{\ketsub{#1} q}

\newcommand{\outprodsubsub}[4]{\ketsub {#1}{#3} \brasub{#2}{#4}}

\newcommand{\qoutprod}[2]{\outprodsubsub{#1}{#2}q q}

\newcommand{\abss}[1]{\lvert{#1}\rvert}

\newcommand{\complex}[0]{\mathbb{C}}

\newcommand{\integers}[0]{\mathbb{Z}}

\renewcommand{\vec}[1]{\bm{\mathrm{#1}}}
%

%%%%%% UNCOMMENT TO ACTIVATE COLOURS
\newcommand{\grn}{\color{green!60!black}}

\newcommand{\blk}{\color{black}}

%%%%%%%% UNCOMMENT TO BLACKEN TEXT
%\newcommand{\grn}{\blk}
%\newcommand{\red}{\blk}
%\newcommand{\blu}{\blk}
%\newcommand{\blk}{\color{black}}
%\newcommand{\gra}{\blk}
%\newcommand{\prp}{\blk}
%\newcommand{\org}{\blk}
%\newcommand{\pnk}{\blk}
%%%%%%%%%%%%%%%%%%%%%%%%%

% REGEX to select all colour commands:
% \\(grn|red|blu|blk|gra|prp|org|pnk)

\usepackage{graphicx}
\graphicspath{{./figures/}}

\let\vec\bvec

\usepackage{soul}

\newcommand{\symplectic}{\text{s}}

\newcommand{\sft}[1]{{\check{#1}}}

\renewcommand{\op}[1]{\smash[t]{\hat{#1}}}
\DeclareMathOperator{\Op}{Op}
\DeclareMathOperator{\Wig}{Wig}
\DeclareMathOperator{\Proj}{{\mathsf P}}
\DeclareMathOperator{\EforOps}{{\mathcal E}}
\DeclareMathOperator{\EforWigs}{{\mathsf E}}
\DeclareMathOperator{\Fourier}{{\mathsf F_{\symplectic;2d}}}

\newcommand{\inputdot}{\makebox[1.5ex]{\text{$\cdot$}}}
\newcommand{\CGS}{\textsc{cgs}}
\newcommand{\RS}{\textsc{rs}}
\newcommand{\DKS}{\textsc{dks}}

\newcommand{\twod}{{(2d)}}

\usepackage[makeroom]{cancel}

\renewcommand{\grn}{\blk}

\begin{document}

\allowdisplaybreaks

\title{Grand Unification of All Discrete Wigner Functions on $d \times d$ Phase Space}

\date{\today}

\author{Lucky K. Antonopoulos}
\email{Lucky.K.Antonopoulos@gmail.com}
\affiliation{Centre for Quantum Computation and Communication Technology, School of Science, RMIT University, Melbourne, Victoria 3000, Australia}

\author{Dominic G. Lewis}
\affiliation{Centre for Quantum Computation and Communication Technology, School of Science, RMIT University, Melbourne, Victoria 3000, Australia}

\author{Jack Davis}
\affiliation{DIENS, \'Ecole Normale Sup\'erieure, PSL University, CNRS, INRIA, 45 rue d'Ulm, Paris 75005, France}

\author{Nicholas Funai}
\affiliation{Centre for Quantum Computation and Communication Technology, School of Science, RMIT University, Melbourne, Victoria 3000, Australia}

\author{Nicolas C. Menicucci}
\email{nicolas.menicucci@rmit.edu.au}
\affiliation{Centre for Quantum Computation and Communication Technology, School of Science, RMIT University, Melbourne, Victoria 3000, Australia}

{
  \hypersetup{linkcolor=black}
}

\begin{abstract}
Wigner functions help visualise quantum states and dynamics while supporting quantitative analysis in quantum information. In the discrete setting, many inequivalent constructions coexist for each Hilbert-space dimension. This fragmentation obscures which features are fundamental and which are artefacts of representation. We introduce a stencil-based framework that exhausts all possible $d\times d$ discrete Wigner functions for a single $d$-dimensional quantum system (including a novel one for even $d$), subsuming known forms. We also give explicit invertible linear maps between definitions within the same $d$, enabling direct comparison of operational properties and exposing representation dependence.
\end{abstract}

\maketitle

\section{Introduction}
Quasiprobability representations provide a powerful framework for describing quantum systems by characterising states, operators, and measurements within a phase space. Spearheaded by Wigner's original function~\cite{wigner_Quantum_1932} for continuous systems, numerous discrete analogues for finite-dimensional Hilbert spaces exist~\cite{miquel_Quantum_2002, chaturvedi_Wigner_2010, Chaturvedi_Wigner_Weyl_2006, Ferrie_frames_2009, tilma_Wigner_2016, luis_Discrete_1998, leonhardt_Discrete_1996, horibe_Existence_2002, takami_Wigner_2001, gibbons_Discrete_2004, agam_Semiclassical_1995, saraceno_Translations_2019, bianucci_Discrete_2002, zak_Doubling_2011, raussendorf_Role_2023, gross_Hudsons_2006, royer_Wigner_1977, vourdas_Quantum_2004, lalita2023harnessing, abgaryan2021families,rundle_Overview_2021}, with many depending on number-theoretic properties of the dimension (e.g., prime, odd, even).

With so many dimension-specific definitions coexisting, we lack tools for comparing them in terms of what properties they share, how they differ (e.g.,~representational dependencies), and how operational features such as negativity---a recognised marker of contextuality and computational \emph{magic}~\cite{Heinrich_robustness_2019,Galvao_discrete_2005,mari_Positive_2012, Howard_Wallman_Veitch_Emerson_2014, Pashayan_2015, raussendorf_Role_2023,hahn_bridging_2025,davis_Identifying_2024,lalita2023harnessing,Cusumano_Non_stabilizerness_2025,Bjork_Klimov_Sanchez_Soto_2008}---translate from one definition to another.  
What would be desirable is to unify this fragmented landscape into a single cohesive framework, with a construction that exhausts all possible discrete Wigner functions (on a suitable phase space) and with tools to systematically study and compare their physical properties. In this work, we achieve this goal.

We introduce a unifying framework based on a \emph{Stencil Theorem} (Theorem~\ref{thm:stencil_theorem}), showing that every possible discrete Wigner function (DWF) for a $d$-dimensional qudit, defined on a $d\times d$ phase space, arises from cross-correlating a $2d\times 2d$ parent function with a $d$-dependent function called a \emph{stencil}. This exhausts all possible $d\times d$ DWF constructions and provides analytical tools to compare their physical predictions. We offer several examples of valid stencils, including one that generates a novel $d\times d$ DWF for all even dimensions.

\section{Notation}%
We work with two-component vectors of the form~$\bvec v = (v_1, v_2)^\tp$, whose symbols fix their domains:
    ${\bvec n
    \in \integers_N^2}$, $
    \bvec{m}
    \in \integers_{2d}^2$, $ 
    \bvec{\alpha}
    \in \integers_{d}^2$, $
    \bvec{b} 
    \in \integers_{2}^2$, $
    \bvec{k} 
    \in \integers^2$. Arithmetic for the first four is modulo~$N$, $2d$, $d$, and~$2$, respectively.
We define the $T$-periodic Kronecker delta, $\Delta_T[k]$, as being $1$ if ${k \bmod T = 0}$ and $0$ otherwise, extending naturally to vector arguments: $\Delta_T[\bvec v] = \Delta_T[v_1] \Delta_T[v_2]$.

\section{Discrete Wigner functions}
We now introduce the class of functions we consider and the domain on which they are defined.

\begin{definition*}[discrete phase space]
    A \emph{discrete phase space} is given by ${\mathcal{P}_N = \integers_N^2}$ for any $N \in \integers^+$.
\end{definition*}

Often, $N=d$ or $N=2d$ is used to represent a qudit of dimension~$d$, but we do not require that for the definition of the phase space itself. However, the following definition requires $N=d$.

\begin{definition*}[faithful phase-space representation]
    Given a discrete phase space~$\mathcal{P}_d$ and a linear, discrete-variable operator~$\op O \in \mathcal L(\complex^d)$, a \emph{faithful phase-space representation of~$\op O$}, denoted~$f_{\hat O}: \mathcal{P}_d \to \complex$, is a complex function on~$\mathcal{P}_d$ that is linear in $\op O$ and for which there exists a quantisation map (discrete analogue of the Weyl transform~\cite{weyl_quantenmechanik_1927}) $\Op$ such that $\Op [f_{\hat O}] = \op O$ for all $\op O$.
\end{definition*}

Here, $f_{\hat{O}}$ and $\Op$ take the form
\begin{align}
f_{\hat O}(\bm\alpha) &= \frac 1 d \Tr[\op{A}(\bm \alpha)^\dag\op O], &\Op[f] = \sum_{\bm\alpha} f(\bm\alpha)\op{B}(\bm\alpha),
\label{eq:weyl_rule}
\end{align}
where $\{\op{A}(\bm\alpha)\}$ and $\{\op{B}(\bm\alpha)\}$ are \emph{operator bases},%
\footnote{These are bases (and not just frames) because the operator space is finite dimensional.} %
i.e., spanning sets of the $d^2$-dimensional operator space $\mathcal{L}(\mathbb{C}^{d})$. Furthermore, $\Tr[\op{A}(\bm \alpha)^\dag\op B(\bvec \beta)] = d\, \Delta_d [\bvec \alpha - \bvec \beta]$, making these bases a pair of dual frames~\cite{Ferrie_frames_2008, Ferrie_frames_2009}.%
\footnote{We include the constant~$d$ in Eq.~\eqref{eq:weyl_rule}, which is typically omitted when defining dual frames.} %
In the continuous-variable setting, many functions arise from such pairs, including the Glauber-Sudarshan $P$-function~\cite{Glauber_function_1963, Sudarshan_function_1963, drummond_Generalised_1980}
, the Husimi $Q$-function~\cite{husimi_Formal_1940,appleby_Generalized_Husimi_1999}, and the Kirkwood-Dirac distribution~\cite{Kirkwood_Quantum_1933,Arvidsson-Shukur_Properties_2024}. 
The Wigner function~\cite{wigner_Quantum_1932}, however, arises from a \emph{self-dual frame}, which, for discrete systems, means that ${\op{A}(\bm\alpha) = \op{B}(\bm\alpha)}$ for all ${\bm\alpha \in \mathcal P_d}$.

Next, we introduce the Weyl-Heisenberg displacement operators \text{(WHDOs}),%
\footnote{We recommend the pronunciation ``who-doo.''} %
which correspond to translations in discrete phase space.

\begin{definition*}[WHDO]
    The Schwinger WHDOs%
    \footnote{\label{foot:WHDO_phase}
    Different WHDOs exist~\cite{durt_Mutually_2010,appleby_Generalized_Husimi_1999,bengtsson_Discrete_2017,raussendorf_Role_2023,schwinger_Unitary_1960}, dependent on choice of phase. Our choice, $\omega_{2d}^{-k_1 k_2}$, matches that used by Schwinger~\cite{schwinger_Unitary_1960} and will be justified below.}%
    ~\cite{schwinger_Unitary_1960} are 
    \begin{align}
    \op{V}(\bvec{k}) \coloneqq \omega_{2d}^{-k_1 k_2} \op{Z}^{k_2} \op{X}^{k_1},
    \label{WHDO_def}
\end{align}
where ${\omega_{d}^{a}\coloneqq e^{\frac{2\pi i}{d}a}}$ are the $d$th roots of unity, ${\op{Z}^{k_2}\coloneqq\sum_{j\in\integers_d} \omega_{d}^{k_2 j}\qoutprod{j}{j}}$ and ${\op{X}^{k_1}\coloneqq\sum_{j\in\integers_d} \qoutprod{j+k_1}{j}}$ are the clock and shift operators that generate displacements in discrete momentum and position, respectively, and $\qket{\inputdot}$ denotes the discrete position basis (a.k.a.\ computational basis)\blk, in which arithmetic is modulo~$d$.
\end{definition*}

\begin{definition*}[discrete PPO]
    For a $d$-dimensional qudit, a \emph{discrete phase-point operator (PPO)} 
    is an operator $\op A (\bvec n) \in \mathcal L(\complex^d)$ associated with a particular point~$\bvec n \in \mathcal P_N$ in a discrete phase space.%
    \footnote{Note that $N$ need not equal~$d$ for this definition.}
\end{definition*}

\begin{definition*}[valid PPO frame]
    A \emph{valid PPO frame} for a DWF is a set of $d^2$ discrete PPOs with ${N=d}$, i.e.,~${\{\op A (\bvec{\alpha}) \mid \bvec \alpha \in \mathcal P_d\}}$, that constitutes a Hermitian, trace-1, WHDO-covariant, self-dual frame. Explicitly, for all $\bvec \alpha, \bvec \beta \in \mathcal P_d$, the following hold:
    \begin{itemize}
    \item[A1:] (Hermiticity) $\op{A}(\bvec \alpha) = \op{A}(\bvec \alpha)^\dag$. \label{A1}
    
    \item[A2:] (Normalisation) ${\Tr[\op{A}(\bvec{\alpha})] = 1}$. \label{A2}
    
    \item[A3:] (Orthogonality) ${\Tr[\op{A}(\bvec{\alpha})^\dag \op{A}(\bvec{\beta})] = d~ \Delta_{d}[\bvec{\alpha} - \bvec{\beta}]}$.
    \label{A3}
    
    \item[A4:] (WHDO covariance) ${\op V(\bm k) \op A(\bvec{\alpha}) \op V^\dagger(\bm k) = \op A(\bvec{\alpha} + \bvec{k})}$.%
    \label{A4}
\end{itemize}
\end{definition*}
The criteria above can be seen as a discrete analogue~\cite{varilly_moyal_1989,Cahen_Stratonovich_2011,ferrie_Quasiprobability_2011, Heiss_Discrete_2000, raussendorf_Role_2023} of the Stratonovich-Weyl criteria~\cite{stratonovich_Distributions_1957,brif_General_1998} that underpin the continuous Wigner function. We can extend this concept of validity to DWFs themselves:

\begin{definition*}[valid DWF]
A \emph{valid DWF} representing an operator $\op{O} \in \mathcal L(\complex^d)$\grn, denoted~$W_{\hat O}:\mathcal P_d \to \complex$, is a faithful phase-space representation of~$\op O$ obtained via Eq.~\eqref{eq:weyl_rule} using a valid PPO frame.
\end{definition*}

We define one additional property not strictly required for validity but desirable nonetheless~\cite{gibbons_Discrete_2004,gross_Hudsons_2006, horibe_Existence_2002,leonhardt_Discrete_1996}.

\begin{definition*}[PPO marginals,  marginalisation]
    Given a valid PPO frame~$\{\op A(\bvec \alpha)\}$, its \emph{PPO marginals} are elements of the sets  $\{\op Q_H(\alpha_2) \coloneqq \frac 1 d \sum_{\alpha_1} \op A(\bvec \alpha)\}$ and $\{\op Q_V(\alpha_1) \coloneqq \frac 1 d \sum_{\alpha_2} \op A(\bvec \alpha)\}$---i.e.,~averages of the PPOs along horizontal and vertical lines, respectively.%
    \footnote{While some authors consider more general lines~\cite{leonhardt_Discrete_1996,miquel_Quantum_2002,horibe_Existence_2002,wootters_Wignerfunction_1987}, doing so restricts the qudit dimension \emph{a priori}, so we consider rectilinear lines only, leaving generalisations to future work.} %
    A valid PPO frame satisfies \emph{marginalisation} if its PPO marginals form a mutually unbiased bases~\cite{Srinivasan_Generalized_2018,ivonovic1981geometrical,durt_Mutually_2010} for the qudit Hilbert space, $\complex^d$.
\end{definition*}

\section{Doubled DWF}%
Other quasiprobability representations exist that resemble a valid DWF but fail to satisfy all defining criteria (A1--A4). One notable example, appearing often in the literature~\cite{hannay_Quantization_1980,leonhardt_Discrete_1996,miquel_Quantum_2002,arguelles_Wigner_2005,Morgan_Tracy_thesis,zak_Doubling_2011, 
hahn_Quantifying_2022,feng_Connecting_2024}, is what we call the \emph{doubled DWF}, defined on $\mathcal{P}_{2d}$.%
\footnote{\label{foot:P2d_redundancy}Since it is defined over $\mathcal P_{2d}$, the doubled DWF contains $4d^2$ elements and thus exhibits a fourfold redundancy in the encoded information, seen as repeated values across its four $d\times d$ subgrids, up to sign changes~\cite{leonhardt_Discrete_1996}.} %
It is the parent function for all valid DWFs and the starting point for our stencil-based framework, detailed below. We obtained this function, like Feng and Luo~\cite{feng_Connecting_2024}, from the Gottesman-Kitaev-Preskill~(GKP) encoding of a qudit into a continuous Hilbert space~\cite{gottesman_Encoding_2001} and considering one unit cell of the resulting continuous Wigner function. Choosing to use the GKP encoding also selects the phase for our WHDOs (see footnote~\ref{foot:WHDO_phase}).

The doubled DWF for an operator~$\op O$, denoted $W_{\hat{O}}^\twod$, is obtained using the doubled PPO frame ${\{\op A^\twod(\bvec m) \mid \bvec m \in \mathcal P_{2d}\}}$ and satisfies the following:
\begin{subequations}
\label{dbld_DWF_and_PPO}
\begin{align}
    &W^\twod_{\hat{O}}(\bvec{m}) 
    =
    \Wig^\twod[\op O](\bvec m)
    \coloneqq 
    \frac{1}{2d} \Tr[\op{A}^\twod(\bvec{m})^\dag\op{O} ], 
    \label{eq:dbld_Wig_op_expanded} \\
    &\op O = 
    \Op^\twod[ W^\twod_{\hat{O}}] \coloneqq 
    \frac 1 2 \sum_{\bvec m} W^\twod_{\hat{O}}(\bvec{m}) \op A^\twod(\bvec m),
    \label{eq:dbld_O_op_expanded} \\
    &\op{A}^\twod(\bvec{m})
    \coloneqq 
    \op{V}(\bvec{m})\op{R}
    = \op{R}\op{V}(\bvec{m})^\dag,
    \label{eq:dbld_PPO}
\end{align}%
\end{subequations}
where the superscript $\twod$ indicates that the phase space is doubled, $\op{O} \in \mathcal L(\complex^d)$ is the qudit operator being represented, and ${\op{R}\coloneqq\sum_{j\in\integers_d}\qoutprod{-j}{j}}$ is the discrete parity operator (using arithmetic modulo~$d$). Also, note that the factor of $\frac 1 2$ in Eq.~\eqref{eq:dbld_O_op_expanded} is a consequence of working with a doubled DWF that uses a normalisation of $\frac 1 {2d}$ instead of the usual $\frac 1 d$.  Additionally, we call the operations~$\Wig^\twod$ and $\Op^\twod$ the \emph{doubled Wigner transform} and \emph{doubled Weyl transform} (analogues of the CV versions~\cite{weyl_quantenmechanik_1927}), respectively. They will be useful later on.

\section{Cross-correlation with a stencil}
We show that \emph{every valid DWF} can be obtained by cross-correlating a single parent function---i.e.,~the doubled DWF---with a chosen function (stencil) on the doubled phase space. Moreover, we show that all quantities calculable from one valid DWF---e.g.,~negativity---must also appear in some (generally different) form in any other valid DWF for the same~$d$\blk.
We now introduce the stencils, together with the DWFs and PPOs they generate.

\begin{definition*}[stencil]
    A \emph{stencil}~$M$ is any complex function of the doubled phase space, $M : \mathcal P_{2d} \to \complex$.
\end{definition*}

\begin{definition*}[$M$-DWF, $M$-PPO]
    Let~$M$ be a stencil and let $\op{O} \in \mathcal L(\complex^d)$ be a qudit operator. We define the \emph{\text{$M$-DWF} generated by~$M$} (for the operator~$\op O$) and the \emph{\text{$M$-PPO} generated by $M$}, respectively, as
    \begin{subequations}
    \label{eqs:M_cross_cor_defs}
    \begin{align}
        W_{\hat{O}}^{M}(\bvec{\alpha}) &\coloneqq 
        (M \star W_{\hat{O}}^\twod)(2\bvec{\alpha})
        = \frac 1 d \Tr[\op A^M(\bvec \alpha)^\dag \op O],
        \label{eq:WM_def} \\
        \op{A}^{M}(\bvec{\alpha}) &\coloneqq \frac{1}{2} (M^{*} \star \op{A}^\twod )(2\bvec{\alpha}),
        \label{eq:AM_def}
    \end{align}
    \end{subequations}
    where $\star$ indicates cross-correlation [see also Eq.~\eqref{eq:cross_cor_as_inner_prod}], ${(f \star g)(\bvec m)} = \sum_{\bvec m'} f(\bvec m')^* g(\bvec m' + \bvec m)$. 
\end{definition*}

The cross-correlations in Eqs.~\eqref{eqs:M_cross_cor_defs} use the stencil~$M$ to reorganise relevant information onto even grid sites and then discard the odd ones%
\footnote{Even-integer sites are chosen because odd coordinates (half-integers in~\cite{leonhardt_Discrete_1996}) are typically introduced post hoc to accommodate both even-$d$ and odd-$d$ systems. Additionally, the marginals of the doubled DWF over odd-integer sites are always zero~\cite{leonhardt_Discrete_1996}.} %
(by evaluating at~${\bvec m = 2\bvec \alpha}$), mapping from $\mathcal P_{2d}$ to $\mathcal P_d$, with the choice of stencil determining how this is done.

\section{Stencil criteria}%
As presented above, the DWF involves a cross-correlation of the doubled DWF with a stencil $M$. Consequently, for the resulting $M$-DWF to satisfy A1--A4, $M$ must also satisfy some analogous criteria. Since A4 is always satisfied for all $M$,%
\footnote{\label{foot:A2d_WHDO}
The doubled PPO satisfies $\op V(\bvec k) \op A^\twod(\bvec m) \op V(\bvec k)^\dag = \op A^\twod(\bvec m + 2 \bvec k)$, so all $M$-PPOs automatically satisfy A4 since $\star$ is linear in the second argument and is being evaluated at~$2\bvec \alpha$.} %
we focus here on the remaining three.

We start by defining an orthogonal projection map on the doubled phase space,~$\Proj \coloneqq \Wig^\twod \circ \Op^\twod$ [see Eqs.~\eqref{dbld_DWF_and_PPO}], acting on a general function, $f:\mathcal P_{2d} \to \complex$, as
\begin{align}
\label{eq:Pdef}
    \bar f \coloneqq \Proj f 
    = 
    \Wig^\twod \circ \Op^\twod[f]
    =
    W^\twod_{\Op^\twod[f]}.
\end{align}
The explicit form of this projector is given in the Appendix. Note that we %
define the shorthand~$\bar f$ for the projected version of~$f$. We use this projector to state the criteria for~$M$ to generate a valid $d \times d$ DWF:

\begin{definition*}[Valid stencil and~$\mathcal M$]
A stencil~$M$ is called \emph{valid} if its projection $\bar M$ satisfies all of the following:%
\footnote{M1--M3 can also be expressed in the Fourier domain using the symplectic discrete Fourier transform---see the Appendix.\blk} %
\begin{enumerate}[align=left]
    \item[M1:] $\bar M(\bvec{m})^* =\bar M(\bvec{m})$, \label{M1}
    
    \item[M2:] 
    $\sum\limits_{\bvec{m}} \bar M(\bvec{m}) = 1$,
    \label{M2}
    
    \item[M3:] $(\bar M \star \bar M)(2 \bvec \alpha) = \Delta_d[\bvec \alpha].$
\end{enumerate}
We define $\mathcal M$ as the set of all valid stencils.
\end{definition*}

As our main result, we present the following theorem, with proof supplied in the Appendix:
\begin{theorem}[Stencil Theorem]
\label{thm:stencil_theorem}
    (a) Every valid DWF over a $d\times d$ phase space is an $M$-DWF generated by some valid stencil. (b) Every valid stencil $M \in \mathcal M$ generates a valid $M$-DWF with corresponding valid $M$-PPO frame~$\{\op A^M(\bvec \alpha)\}$.
\end{theorem}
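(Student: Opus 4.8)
The plan is to channel the whole stencil into one ``seed'' operator, $\op{G} \coloneqq \Op^\twod[M]$, and then match the frame axioms A1--A4 for $\{\op{A}^M(\bvec{\alpha})\}$ to the stencil criteria M1--M3 for $\bar M$ one at a time (A4 being automatic, per footnote~\ref{foot:A2d_WHDO}). The first move is to unfold the cross-correlation in Eq.~\eqref{eq:AM_def}: re-indexing gives $\op{A}^M(\bvec{\alpha}) = \Op^\twod[M(\inputdot - 2\bvec{\alpha})]$, and the intertwining relation $\Op^\twod[f(\inputdot - 2\bvec{\alpha})] = \op{V}(\bvec{\alpha})\,\Op^\twod[f]\,\op{V}(\bvec{\alpha})^\dag$ --- a consequence of $\op{V}(\bvec{k})\op{A}^\twod(\bvec{m})\op{V}(\bvec{k})^\dag = \op{A}^\twod(\bvec{m}+2\bvec{k})$ and linearity --- produces the closed form $\op{A}^M(\bvec{\alpha}) = \op{V}(\bvec{\alpha})\,\op{G}\,\op{V}(\bvec{\alpha})^\dag$ with $\op{G} = \Op^\twod[M] = \Op^\twod[\bar M]$, the last equality because $\bar M = \Wig^\twod[\Op^\twod[M]]$ and $\Op^\twod \circ \Wig^\twod = \mathrm{id}$ on $\mathcal{L}(\complex^d)$ (the same identity makes $\Proj$ idempotent). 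So the $M$-PPO sees $M$ only through $\bar M$. From this closed form A4 is immediate (the Weyl--Heisenberg composition phases occur in conjugate pairs and cancel under conjugation), and $\op{V}(\bvec{\alpha})\op{G}\op{V}(\bvec{\alpha})^\dag$ is insensitive to the integer representative chosen for $\bvec{\alpha} \in \integers_d^2$ (the representative-dependent sign of $\op{V}$ squares away), so $\op{A}^M$ is well defined on $\mathcal{P}_d$.

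\emph{Axioms A1 and A2.} Because $\op{A}^\twod(\bvec{m}) = \op{V}(\bvec{m})\op{R} = \op{R}\op{V}(\bvec{m})^\dag$ is Hermitian [Eq.~\eqref{eq:dbld_PPO}], $\Op^\twod$ sends real functions to Hermitian operators; since $\bar M$ is real by M1, $\op{G}$ is Hermitian, giving A1 for every $\op{A}^M(\bvec{\alpha})$. For A2 I would first establish $\sum_{\bvec{m} \in \integers_{2d}^2} \op{A}^\twod(\bvec{m}) = 2d\,\op{I}$ --- a short computation, since $\sum_{\bvec{m}} \op{V}(\bvec{m}) = 2d\,\op{R}$ by a coordinate-wise geometric-series collapse and $\op{R}^2 = \op{I}$ --- and then use $\op{I} = \tfrac{1}{2d}\sum_{\bvec{m}}\op{A}^\twod(\bvec{m})$ together with $\Wig^\twod[\op{G}] = \bar M$ to get $\Tr[\op{A}^M(\bvec{\alpha})] = \Tr \op{G} = \tfrac{1}{2d}\sum_{\bvec{m}} \Tr[\op{A}^\twod(\bvec{m})\op{G}] = \sum_{\bvec{m}} \bar M(\bvec{m})$, so A2 holds iff M2 does.

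\emph{Axiom A3 --- the crux.} Using the closed form and the Weyl--Heisenberg algebra, $\Tr[\op{A}^M(\bvec{\alpha})^\dag \op{A}^M(\bvec{\beta})]$ collapses (the $\bvec{\alpha}$-dependent phases again cancel) to $\Tr[\op{G}\,\op{V}(\bvec{k})\,\op{G}\,\op{V}(\bvec{k})^\dag]$ with $\bvec{k} = \bvec{\beta} - \bvec{\alpha}$, so A3 is equivalent to the scalar identities $\Tr[\op{G}\op{V}(\bvec{k})\op{G}\op{V}(\bvec{k})^\dag] = d\,\Delta_d[\bvec{k}]$ for all $\bvec{k}$. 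The decisive step is to match this against the cross-correlation $(\bar M \star \bar M)(2\bvec{k})$ in M3, which I would do with an overlap (Parseval-type) identity $\Tr[\op{G}_1^\dag \op{G}_2] = d \sum_{\bvec{m}} W^\twod_{\hat{G}_1}(\bvec{m})^* \, W^\twod_{\hat{G}_2}(\bvec{m})$ for the doubled transform, applied to $\op{G}_1 = \op{G}$ and $\op{G}_2 = \op{V}(\bvec{k})\op{G}\op{V}(\bvec{k})^\dag$ (whose doubled DWF is the translate $\bar M(\inputdot - 2\bvec{k})$, by WHDO covariance of $\Wig^\twod$); this turns the trace into $d\,(\bar M \star \bar M)(2\bvec{k})$, so A3 holds iff M3 does. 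I expect this overlap identity to carry the real weight of the proof: it is precisely the statement that $\{\op{A}^\twod(\bvec{m})\}_{\bvec{m} \in \integers_{2d}^2}$ is a tight operator frame with constant $4d$, the factor $4$ being the fourfold redundancy of $\mathcal{P}_{2d}$ (``four $d \times d$ subgrids up to signs''); I would prove it by reducing the $\integers_{2d}^2$ sum to an $\integers_d^2$ fundamental domain, where the WHDOs form an orthogonal operator basis and the WHDO twirl is depolarising, with the crossing signs once more squaring away --- all while keeping the factors of $2$ and $d$ (and the even-$d$ versus odd-$d$ sign differences already visible in $\Tr[\op{A}^\twod(\bvec{m})]$) consistent, for which the explicit form of $\Proj$ from the Appendix is the organising tool. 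Granting it, M1--M3 force A1--A4, so $\{\op{A}^M(\bvec{\alpha})\}$ is a valid PPO frame --- the $d^2$ operators are pairwise Hilbert--Schmidt orthogonal by A3, hence a self-dual operator basis of $\mathcal{L}(\complex^d)$ --- and $W^M_{\hat{O}}$ is by definition a valid DWF. This proves part~(b).

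\emph{Part (a).} Conversely, given a valid PPO frame $\{\op{A}(\bvec{\alpha})\}$, WHDO covariance yields $\op{A}(\bvec{\alpha}) = \op{V}(\bvec{\alpha})\op{A}(\bvec{0})\op{V}(\bvec{\alpha})^\dag$, so I would set $\op{G} \coloneqq \op{A}(\bvec{0})$ and $M \coloneqq \Wig^\twod[\op{G}]$. Then $\bar M = M$ (it is already a physical doubled DWF, i.e.\ a fixed point of $\Proj$), and $\op{A}^M(\bvec{\alpha}) = \op{V}(\bvec{\alpha})\,\Op^\twod[\Wig^\twod[\op{G}]]\,\op{V}(\bvec{\alpha})^\dag = \op{V}(\bvec{\alpha})\op{G}\op{V}(\bvec{\alpha})^\dag = \op{A}(\bvec{\alpha})$ by $\Op^\twod \circ \Wig^\twod = \mathrm{id}$, so the given DWF is the $M$-DWF. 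Finally, A1, A2, A3 for $\{\op{A}(\bvec{\alpha})\}$ translate --- via the three equivalences above, specialised to $\op{G} = \op{A}(\bvec{0})$ --- into M1, M2, M3 for $\bar M$, so $M \in \mathcal{M}$. This proves part~(a).
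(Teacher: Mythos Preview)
Your proposal is correct and follows essentially the same route as the paper: both write $\op{A}^M(\bvec{\alpha})$ as $\Op^\twod$ applied to a shift of the (projected) stencil, then match M1--M3 to A1--A3 via, respectively, Hermiticity of $\op{A}^\twod$, the trace identity $\sum_{\bvec{m}}\op{A}^\twod(\bvec{m})=2d\,\op{I}$, and the Parseval/inner-product identity for $\Wig^\twod$ (with A4 automatic), and handle part~(a) by taking $M=\Wig^\twod[\op{A}(\bvec{0})]$ and running the equivalences in reverse. Your explicit ``seed'' operator $\op{G}=\Op^\twod[M]$ and the conjugation form $\op{A}^M(\bvec{\alpha})=\op{V}(\bvec{\alpha})\op{G}\op{V}(\bvec{\alpha})^\dag$ are a clean repackaging, but not a different argument.
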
 

The Stencil Theorem shows (a)~valid stencil mappings of the doubled DWF (with stencils satisfying M1--M3) exhaust all possible valid $d\times d$ DWFs (those whose PPOs~$\op A(\bvec \alpha)$ satisfy A1--A4), and (b)~every valid stencil generates a valid $d\times d$ DWF. One generating stencil for the DWF in (a) is $M
= \Wig^\twod [\op A(\bvec 0)] = \bar M$ (see the Appendix).

Finally, note that for a given valid DWF, its generating stencil~$M$ is not unique, but its projected stencil~$\bar M$ is unique (for proof, see the Appendix):

\begin{corollary}\label{corr:nonunique_stencil}
Any two valid stencils~$M_1, M_2 \in \mathcal M$ generate the same $M$-DWF if and only if $\bar M_1 = \bar M_2$ .
\end{corollary}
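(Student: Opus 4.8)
The plan is to prove both directions of the biconditional in Corollary~\ref{corr:nonunique_stencil} by tracing how the projected stencil~$\bar M$ enters the definition of the $M$-DWF. The key observation is that the cross-correlation in Eq.~\eqref{eq:WM_def} factors through the projector $\Proj$: because $W^\twod_{\hat O}$ always lies in the image of $\Proj$ (it equals $\Wig^\twod[\op O]$ and $\Proj = \Wig^\twod \circ \Op^\twod$ with $\Op^\twod \circ \Wig^\twod = \mathrm{id}$ on operators), and because $\Proj$ is an orthogonal projection on $\mathcal P_{2d}$, the quantity $(M \star W^\twod_{\hat O})(2\bvec\alpha)$ depends on $M$ only through its projection $\bar M = \Proj M$. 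Concretely, I would write the cross-correlation as an inner product on $\mathcal P_{2d}$ [using Eq.~\eqref{eq:cross_cor_as_inner_prod}], namely $(M \star W^\twod_{\hat O})(2\bvec\alpha) = \langle \op V(2\bvec\alpha)\text{-shifted } M,\, W^\twod_{\hat O}\rangle$ up to the appropriate indexing, and then use self-adjointness of $\Proj$ together with $\Proj W^\twod_{\hat O} = W^\twod_{\hat O}$ to move the projector onto $M$: $\langle M, \cdot\rangle = \langle M, \Proj(\cdot)\rangle = \langle \Proj M, \cdot\rangle = \langle \bar M, \cdot\rangle$. One subtlety is that the shift by $2\bvec\alpha$ must commute with $\Proj$ in the relevant sense; this follows from the WHDO-covariance property of $\op A^\twod$ noted in footnote~\ref{foot:A2d_WHDO}, which guarantees that translating by even vectors preserves the image of $\Proj$.

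For the ``if'' direction, suppose $\bar M_1 = \bar M_2$. Then by the factorization above, $W^{M_1}_{\hat O}(\bvec\alpha) = (\bar M_1 \star W^\twod_{\hat O})(2\bvec\alpha) = (\bar M_2 \star W^\twod_{\hat O})(2\bvec\alpha) = W^{M_2}_{\hat O}(\bvec\alpha)$ for every operator $\op O$ and every $\bvec\alpha$, so the two $M$-DWFs coincide. For the ``only if'' direction, suppose $W^{M_1}_{\hat O} = W^{M_2}_{\hat O}$ for all $\op O$. By linearity in $\op O$ and Eq.~\eqref{eq:WM_def}, this is equivalent to $\op A^{M_1}(\bvec\alpha) = \op A^{M_2}(\bvec\alpha)$ for all $\bvec\alpha$ (two operators with equal normalized traces against every $\op O$ are equal, taking $\op O$ to range over a basis). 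Now I would invert the relation between the $M$-PPO and $\bar M$: since both $M_1, M_2$ are valid, Theorem~\ref{thm:stencil_theorem}(b) says each $\{\op A^{M_i}(\bvec\alpha)\}$ is a valid PPO frame, and the Appendix identifies the generating projected stencil as $\bar M_i = \Wig^\twod[\op A^{M_i}(\bvec 0)]$ (the remark right after the theorem). Hence $\op A^{M_1}(\bvec 0) = \op A^{M_2}(\bvec 0)$ immediately gives $\bar M_1 = \bar M_2$.

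The main obstacle I anticipate is making the ``only if'' direction airtight without circularity: one must be sure that the map $\bar M \mapsto \{\op A^M(\bvec\alpha)\}$ (restricted to valid stencils) is genuinely injective, and the cleanest route is the formula $\bar M = \Wig^\twod[\op A^M(\bvec 0)]$, which expresses $\bar M$ as an explicit linear function of a single PPO. I would need to verify this formula holds for $M$-PPOs generated by arbitrary valid stencils (not just for the ``canonical'' stencil mentioned after the theorem) — it should follow by applying $\Wig^\twod$ to Eq.~\eqref{eq:AM_def} evaluated at $\bvec\alpha = \bvec 0$, using $\op A^\twod(\bvec 0) = \op R$ and $\Proj(M^* \star \op A^\twod)\big|_{\bvec 0}$ collapsing to $\Proj M^* = \bar M^* = \bar M$ by M1. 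Once that identity is in hand, both directions are short; the heavy lifting is entirely in the bookkeeping connecting $\star$, $\Proj$, and the even-site restriction, most of which can be cited from the Appendix's explicit form of $\Proj$.
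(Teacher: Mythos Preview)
Your proposal is correct and follows essentially the same route as the paper's proof. The paper's argument is: (i)~Eq.~\eqref{eq:AM_from_Mbar} shows the $M$-PPO frame depends on $M$ only through $\bar M$, giving the ``if'' direction; (ii)~equal PPO frames give equal $\op A^M(\bvec 0)$, and Eq.~\eqref{eq:Mbar_is_Wig_A} recovers $\bar M = \Wig^\twod[\op A^M(\bvec 0)]$, giving the ``only if'' direction. Your ``only if'' is identical, and your ``if'' is the dual version of the paper's---you move $\Proj$ through the cross-correlation on the DWF side rather than citing the PPO-side identity~\eqref{eq:AM_from_Mbar}, but the underlying computation (self-adjointness of $\Proj$, $\Proj W^\twod_{\hat O} = W^\twod_{\hat O}$, and commutation of $\Proj$ with even shifts) is exactly what the paper uses to derive~\eqref{eq:AM_from_Mbar} in the first place. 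Your extra care in verifying $\bar M = \Wig^\twod[\op A^M(\bvec 0)]$ for an arbitrary valid stencil (via $\Op^\twod[\bar M^*]$ and M1) is well placed and matches what the paper leaves implicit.
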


\section{Examples of valid stencils}%
Here we introduce some examples of valid stencils.
The \emph{reduction stencil} (RS)~$M^\RS(\bvec m) \coloneqq 2 \Delta_{2d} [\bvec m]$, valid for odd~$d$, is motivated by Leonhardt's work~\cite{leonhardt_Discrete_1996}, which reveals that a single copy of non-redundant information already exists at the even grid sites, so the odd ones can simply be discarded. Its $M$-PPO frame, $\{\op A^\RS(\bvec \alpha) =  \op A^\twod(2 \bvec \alpha)\}$, satisfies marginalisation.

For even~$d$, the even grid sites in the doubled DWF have a fourfold redundancy (see footnote~\ref{foot:P2d_redundancy}), so we need a stencil that also involves the odd ones.%
\footnote{Like other authors~\cite{leonhardt_Discrete_1996,zak_Doubling_2011}, we conjecture that the dichotomy between even and odd~$d$ may be linked to spin statistics.} %
The \emph{coarse-grain stencil} (CGS)~${M^\CGS(\bvec m) \coloneqq \sum_{\bvec b} \Delta_{2d}[ \bvec m - \bvec b]}$, valid for even~$d$ and extrapolated from its known $d=2$ incarnation [Eq.~\eqref{PPOs_RS_Leon_Cohen_Woot1}, below], averages four neighbouring sites into one, ensuring we use the information from both even and odd sites. Its $M$-PPO frame, $\{\op A^\CGS(\bvec \alpha) = \frac 1 2 \sum_{\bvec b} \op A^\twod(2 \bvec \alpha + \bvec b)\}$, satisfies marginalisation.

Figure~\ref{fig:stencilplots} illustrates these two stencils---along with their respective projections, $\bar M$ [using Eq.~\eqref{eq:Pdef}]---for two representative dimensions. The Appendix proves both stencils' validity, as well as marginalisation for both $M$-PPOs.

By inspection [see Eq.~\eqref{eq:dbld_PPO}] and through Ref.~\cite{ferrie_Quasiprobability_2011}, we find that the PPO $\op A^\RS$ has the same form as Wootters'~\cite{wootters_Wignerfunction_1987} for prime $d > 2$, as well as Gross'~\cite{gross_Hudsons_2006}, Leonhardt's~\cite{leonhardt_Discrete_1996}, and (up to a sign flip on~$\alpha_1$) Cohendet et~al.'s Fano operator~$\op \Delta$~\cite{cohendet_Stochastic_1988,fano_Description_1957} for odd~$d$. Furthermore, $\op A^\CGS$ matches Wootters' definition when evaluated for $d=2$, which was also shown by Feng and Luo~\cite{feng_Connecting_2024}. But for even $d > 2$ the coarse-grain DWF lies outside the family of $d \times d$ DWFs defined by Chaturvedi et~al.~\cite{chaturvedi_Wigner_2010}, rendering it a novel $d \times d$ DWF for even~$d$. 
Altogether, we have the following equivalences (possibly among others):
\begin{align}
    \op{A}^{\RS}_{{\text{odd $d$}}}
    &=
    \op{A}^{\text{\cite{wootters_Wignerfunction_1987}}}_{{\text{prime}~d>2}}
    = 
    \op{A}^{\text{\cite{leonhardt_Discrete_1996}}}_{{\text{odd $d$}}}
    =
    \op{A}^{\text{\cite{gross_Hudsons_2006}}}_{{\text{odd $d$}}}
    =
    \op{\Delta}^{\text{\cite{cohendet_Stochastic_1988}}}_{{\text{odd $d$}}}(-\inputdot,\inputdot)
    ,
    \nonumber \\*
    \op{A}^{\CGS}_{{d=2}} &= \op{A}^{\text{\cite{wootters_Wignerfunction_1987}}}_{{d=2}}
    =
    \op{A}^{\text{\cite{chaturvedi_Wigner_2010}}}_{{d=2}},
    \label{PPOs_RS_Leon_Cohen_Woot1}
\end{align}

The Appendix introduces a third stencil based on a Dirichlet kernel~\cite{agam_Semiclassical_1995}, $M^\DKS$, which is valid for odd~$d$ but whose projection~$\bar M^\DKS$ differs from $\bar M^\RS$. Thus, by Corollary~\ref{corr:nonunique_stencil}, this generates a DWF distinct from Gross’~\cite{gross_Hudsons_2006}.

\begin{figure}[t]
\centering
\includegraphics[width=\columnwidth]{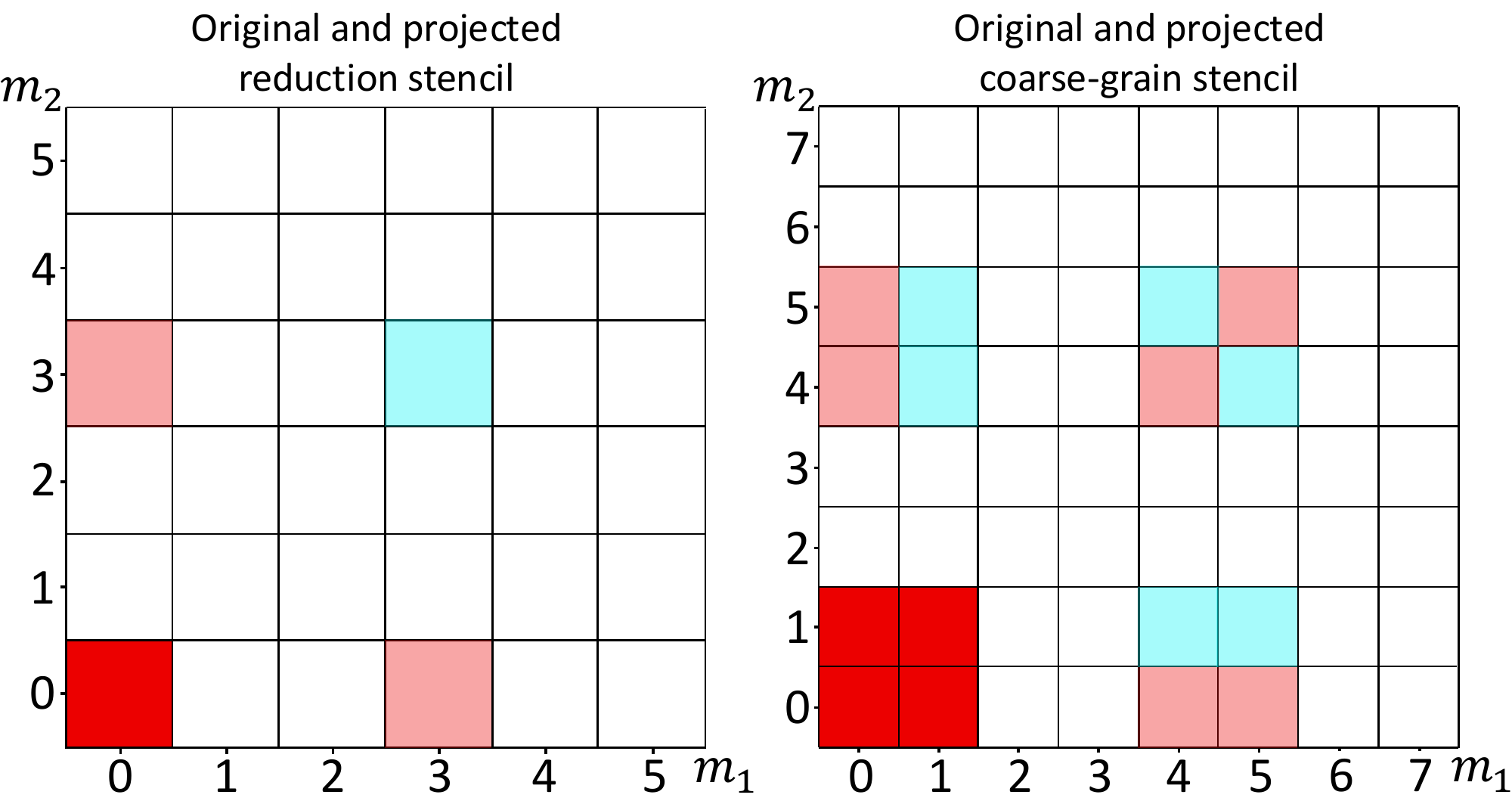}
\caption{Stencil plots of $M(\bvec{m})$ (bright) and its projection~$\bar M(\bvec{m})$ (bright+faded) for the reduction stencil $M^{\RS}$ (${d=3}$, left) and the coarse-grain stencil $M^{\CGS}$ (${d=4}$, right). White, red, and cyan (of any saturation) indicate $0$, $+c$, $-c$, respectively, with a different constant~$c$ for each stencil: $c^\RS = 2$ and $c^\CGS= 1$; for the projected stencils, $\bar c^\RS = \tfrac 1 2$ and $\bar c^\CGS = \tfrac 1 4$.}
\label{fig:stencilplots}
\end{figure}

\section{Linear maps corresponding to a change of stencil}
Since all valid DWFs are $M$-DWFs, we have a systematic way to link all of them (for the same qudit dimension) based solely on their defining stencils, which are guaranteed to exist by Theorem~\ref{thm:stencil_theorem}.

\begin{theorem}
    There exist (a)~a stencil-dependent linear map between any two operators reconstructed from the same phase-space function but using different valid PPO frames and (b)~a stencil-dependent linear map between any two valid DWFs representing the same operator.
    \label{thm:E_existence}
\end{theorem}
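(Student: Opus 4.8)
The plan is to obtain \cref{thm:E_existence} as a soft corollary of \cref{thm:stencil_theorem}. Every representation appearing in the statement is attached to a valid stencil $M\in\mathcal M$, and each such $M$ carries a mutually inverse pair of linear isomorphisms between the function space $\complex^{\mathcal P_d}$ and the operator space $\mathcal L(\complex^d)$: the \emph{$M$-Wigner transform} $\Wig^M[\op O](\bvec\alpha)\coloneqq\tfrac1d\Tr[\op A^M(\bvec\alpha)^\dag\op O]$ and the \emph{$M$-Weyl transform} $\Op^M[f]\coloneqq\sum_{\bvec\alpha}f(\bvec\alpha)\op A^M(\bvec\alpha)$ [cf.\ \eqref{eq:weyl_rule}, using self-duality of the valid PPO frame so the same frame plays both roles]. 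The maps in (a) and (b) will then just be ``one representation's analysis transform followed by the other's synthesis transform.''

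First I would check that $\Wig^M$ and $\Op^M$ are inverse bijections. By \Cref{thm:stencil_theorem}(b), $\{\op A^M(\bvec\alpha)\}$ obeys A3, so applying $\tfrac1d\Tr[\op A^M(\bvec\beta)^\dag(\,\cdot\,)]$ to a relation $\sum_{\bvec\alpha}c_{\bvec\alpha}\op A^M(\bvec\alpha)=0$ forces $c_{\bvec\beta}=0$ for every $\bvec\beta$; hence these $d^2$ operators are linearly independent, span $\mathcal L(\complex^d)$, and $\Op^M$ is a bijection. The same A3 computation gives $\Wig^M\circ\Op^M=\mathrm{id}$ on $\complex^{\mathcal P_d}$, and by equality of (finite) dimensions also $\Op^M\circ\Wig^M=\mathrm{id}$ on $\mathcal L(\complex^d)$, so $\Wig^M=(\Op^M)^{-1}$.

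With this, (a) is immediate: fixing $f:\mathcal P_d\to\complex$ and $M_1,M_2\in\mathcal M$ and setting $\op O_i\coloneqq\Op^{M_i}[f]$, we have $f=\Wig^{M_1}[\op O_1]$, hence $\op O_2=(\Op^{M_2}\circ\Wig^{M_1})[\op O_1]$; so $\EforOps_{M_1\to M_2}\coloneqq\Op^{M_2}\circ\Wig^{M_1}$ is a linear map realising $\op O_1\mapsto\op O_2$ for \emph{every} operator, depends only on the two stencils (through their PPO frames), and is invertible with inverse $\EforOps_{M_2\to M_1}$. Dually, for (b), fixing $\op O\in\mathcal L(\complex^d)$ and setting $W_i\coloneqq\Wig^{M_i}[\op O]=W^{M_i}_{\op O}$, we have $\op O=\Op^{M_1}[W_1]$, hence $W_2=(\Wig^{M_2}\circ\Op^{M_1})[W_1]$; so $\EforWigs_{M_1\to M_2}\coloneqq\Wig^{M_2}\circ\Op^{M_1}$ sends $W^{M_1}_{\op O}\mapsto W^{M_2}_{\op O}$ for every $\op O$, is linear and stencil-dependent, and is invertible with inverse $\EforWigs_{M_2\to M_1}$. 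To match the ``explicit'' maps promised in the abstract, I would then unfold $\Wig^{M_1}$ and $\Op^{M_2}$ via \eqref{eqs:M_cross_cor_defs} --- cross-correlations of the $M_i$ against the doubled DWF and doubled PPO frame, evaluated on the even sublattice $\bvec m=2\bvec\alpha$ with $\Op^\twod$/$\Wig^\twod$ composed in between --- to get a closed-form kernel built from $M_1$ and $M_2$ alone; and I would note that, since A1 and A2 hold for every valid PPO frame, $\EforOps$ and $\EforWigs$ preserve Hermiticity and trace, hence transport density operators and their quasiprobability representations faithfully between definitions.

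I do not expect a genuine obstacle: the statement is ``soft'' once \cref{thm:stencil_theorem} is in hand. The one step that must not be glossed is the bijectivity argument of the second paragraph, where one really uses that a valid PPO frame is a \emph{basis} (not merely a redundant spanning frame) and is self-dual --- both supplied by \cref{thm:stencil_theorem}(b). The only mildly laborious, but entirely routine, part is writing down the nested cross-correlation kernel and tracking the $\mathcal P_{2d}\!\to\!\mathcal P_d$ restriction bookkeeping.
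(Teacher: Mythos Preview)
Your proposal is correct and follows essentially the same route as the paper: both define $\EforOps_{M_1\to M_2}=\Op^{M_2}\circ\Wig^{M_1}$ and $\EforWigs_{M_1\to M_2}=\Wig^{M_2}\circ\Op^{M_1}$, and invoke \cref{thm:stencil_theorem} to guarantee a valid stencil for every valid PPO frame. Your explicit verification that $\Wig^M$ and $\Op^M$ are mutual inverses (via A3), your remark on invertibility of $\EforOps$ and $\EforWigs$, and the Hermiticity/trace-preservation observation are welcome additions that the paper's terser proof leaves implicit.
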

\begin{proof}
    Let $M_1, M_2 : \mathcal P_{2d} \to \complex$ be arbitrary valid stencils for the same doubled phase space. For each such stencil pair~$(M_1, M_2)$, define the stencil-dependent linear maps $\EforOps : \mathcal L(\complex^d) \to \mathcal L(\complex^d)$ and $\EforWigs: (\mathcal{P}_d \to \complex) \to (\mathcal{P}_d \to \complex)$ acting on operators and phase-space functions, respectively:
    \begin{align}
    \label{eq:Emap_ops}
        \EforOps_{M_{1 \to 2}} &\coloneqq \Op^{M_2} \circ \Wig^{M_1},
    \\
    \label{eq:Emap_functions}
    \EforWigs_{M_{1 \to 2}} &\coloneqq \Wig^{M_2} \circ \Op^{M_1},
    \end{align}
    where ${\Wig^M[\op O] \coloneqq W^M_{\hat O}}$, ${\Op^M[f] \coloneqq \sum_{\bvec \alpha} f(\bvec \alpha) \op A^M(\bvec \alpha)}$, and $M_{1 \to 2}$ is short for $M_1 \to M_2$. For any pair of valid PPO frames for the same qudit dimension~$d$, a corresponding valid stencil pair~$(M_1, M_2)$ always exists (Theorem~\ref{thm:stencil_theorem}). Thus, $\EforOps_{M_{1 \to 2}}$ and $\EforWigs_{M_{1 \to 2}}$ always exist and satisfy the following for all $\op O \in \mathcal L(\complex^d)$ and all phase-space functions~$W:\mathcal P_d \to \complex$:
    \begin{align}
         \EforOps_{M_{1\rightarrow2}}\{\Op^{M_1}[W]\}
         &=
         \Op^{M_2}[W],
        \\
        \EforWigs_{M_{1\rightarrow 2}}\{\Wig^{M_1}[\hat{O}]\} &= \Wig^{M_2}[\hat{O}].
    \end{align}
    This proves the Theorem.
\end{proof}
\noindent Figure~\ref{fig_Superoperator_Maps} shows a graphical representation of this Theorem. We discuss its implications below.

\begin{figure}[t]
\centering
\includegraphics[width=\columnwidth]{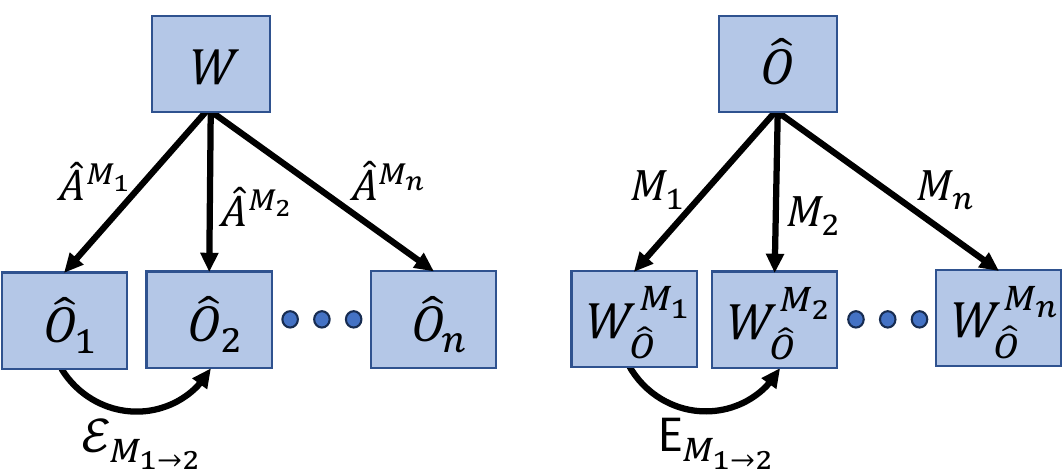}
\caption{Left: A function on phase space, $W:\mathcal P_d \to \complex$, can represent different operators $\op O_i$ by varying the choice of $M$-PPO~$\op A^{M_i}$ used to construct it. These are all related by a stencil-dependent linear map on operators, $\EforOps$. Right: Analogously, each valid DWF can represent a given operator~$\op O$ as a distinct phase-space function, all of which are related by a stencil-dependent linear map on phase-space functions,~$\EforWigs$. By Theorem~\ref{thm:E_existence}, these maps exist for all valid PPO frames.}
\label{fig_Superoperator_Maps}
\end{figure}

\section{Discussion}
Our stencil-based framework systematically constructs all valid $d\times d$ DWFs by cross-correlating the doubled DWF, whose entries contain a fourfold redundancy (see footnote~\ref{foot:P2d_redundancy}), with a stencil $M\in\mathcal M$ that removes this redundancy. Valid DWFs for the same~$d$ differ only in the choice of stencil---e.g., $M^\RS$ versus $M^\DKS$ in odd dimensions and $M^\CGS$ versus others for even~$d$~\cite{chaturvedi_Wigner_2010}.

The invertible linear maps ($\EforOps$, $\EforWigs$) unify all valid DWFs into a single equivalence class for each Hilbert-space dimension, making differences between DWFs (for fixed~$d$) purely representational. These maps, therefore, enable representation-independent benchmarking of quantum resource measures~\cite{Howard_Campbell_RoM_2017,Heinrich_robustness_2019,Leon_Stabilizer_2022,raussendorf_Role_2023,davis_Identifying_2024} and systematic comparisons of features depending on the chosen stencil---e.g.,\ $M^\RS$ versus $M^\DKS$ in odd dimensions and $M^\CGS$ versus others for even~$d$~\cite{chaturvedi_Wigner_2010}.

Essential topological differences between dimensions~\cite{raussendorf_Role_2023} leave the most useful features from the odd-$d$ setting unavailable for even~$d$---notably, Hudson's theorem~\cite{gross_Hudsons_2006}, Clifford covariance~\cite{Zhu_permutation2016,raussendorf_Role_2023}, and hence a straightforward resource theory of negativity~\cite{Galvao_discrete_2005,gross_Hudsons_2006,raussendorf_Role_2023}. Revisiting these features from a stencil perspective may clarify the nature of these limitations and inspire the discovery of novel, dimension-agnostic features and resources.

Recasting the Wigner framework in terms of stencils converts a historical obstacle---redundancy removal~\cite{zak_Doubling_2011,leonhardt_Discrete_1996,agam_Semiclassical_1995,arguelles_Wigner_2005,feng_Connecting_2024}---into a purposeful design choice. The stencils themselves naturally provide a systematic catalogue of all valid DWFs, offering an organised search space to identify optimal candidates for efficient classical simulation of quantum circuits via phase-space methods~\cite{Veitch_Ferrie_Gross_Emerson_2012,mari_Positive_2012,Pashayan_2015,Kocia2017,Park_Extending_2024}. Furthermore, by relaxing specific validity criteria (criterion A1, and consequently M1), our approach generalises directly to discrete Kirkwood–Dirac quasidistributions~\cite{Kirkwood_Quantum_1933,Arvidsson-Shukur_Properties_2024}, with potential extensions to other quasiprobability families. Finally, future work may explore the connection of our stencil framework to Cohen's class of quasidristributions~\cite{cohen_Generalized_1966}, extended to discrete signals~\cite{wu1994discrete}.

\emph{Acknowledgments---}We thank Takaya Matsuura, Ben Baragiola, Ryuji Takagi, and Julian Greentree for their support in this work. This work was supported by the Australian Research Council (ARC) through the Centre of Excellence for Quantum Computation and Communication Technology (Project No. CE170100012). N.C.M. was supported by an ARC Future Fellowship (Project No. FT230100571). J.D. acknowledges funding from the Agence Nationale de la Recherche (ANR) under the Plan France 2030 programme, project NISQ2LSQ (Grant No. ANR-22-PETQ-0006) and from the European Union’s Horizon Europe Framework Programme (EIC Pathfinder Challenge project Veriqub) under Grant Agreement No. 101114899.

\appendix
\section*{Appendix}
\setcounter{section}{1}
\setcounter{equation}{0}
\renewcommand{\theequation}{A\arabic{equation}}

\subsection{Stencil projector~$\Proj$ and inner products on $\mathcal P_{2d}$}%
The doubled phase space $\mathcal{P}_{2d}$ can be equipped with an inner product $\langle f, g \rangle \coloneqq \sum_{\bvec m} f(\bvec m)^* g(\bvec m)$ to form $L_{2}(\mathcal{P}_{2d})$. $\Proj$ is a linear map on this inner-product space. Thus,
\begin{align}
    \bar f(\bvec m) = (\Proj f)(\bvec m) &= \sum_{\bvec m'} P_{\bvec m, \bvec m'} f(\bvec m'),
    \label{eq:proj_of_func}
\end{align}
where, using Eqs.~\eqref{WHDO_def}, \eqref{dbld_DWF_and_PPO}, and~\eqref{eq:Pdef}, we get its matrix elements explicitly:
\begin{align}
    P_{\bvec m, \bvec m'} &\coloneqq \frac 1 {4d} \Tr[\op A^\twod(\bvec{m})^{\dag} \op A^\twod(\bvec{m}')]
\\
    &= \frac 1 4 \sum_{\bvec b} 
    (-1)^{b_1 m_2 -b_2 m_1 - b_1 b_2 d}\Delta_{2d}[\bvec m - \bvec m' - d\, \bvec b]. \nonumber
\end{align}
Altogether, then, for any function~$f:\mathcal P_{2d}\to \complex$,
\begin{align}
\label{eq:proj_explicit}
    \bar f(\bvec m)
&=
    \frac 1 4 \sum_{\bvec b} 
    (-1)^{b_1 m_2 -b_2 m_1 - b_1 b_2 d} f(\bvec m - d\, \bvec b).
\end{align}

Since $\op A^\twod$ is Hermitian, $P_{\vec m, \vec m'} = P_{\vec m', \vec m}^*$, ensuring $\Proj$ is self-adjoint with respect to $L_{2}(\mathcal{P}_{2d})$. 
From Eq.~\eqref{eq:Pdef}, we have that $\Proj$ is idempotent ($\Proj^{2}=\Proj$); hence it is an orthogonal projector on this space. This means
\begin{align}
\label{eq:dbl_inner_prod}
    \langle \bar f, g \rangle
    =
    \langle f, \bar g \rangle
    =
    \langle \bar f, \bar g \rangle
    =
    \frac 1 d \Tr[\Op^\twod[f]^\dag \Op^\twod[g]]
\end{align}
for all $f, g \in L_2(\mathcal P_{2d})$, which is an extension of the \emph{inner product property} for the doubled DWF~\cite{leonhardt_Discrete_1996} to include all functions~$f, g$ on the doubled phase space.

Furthermore, since $\sum_{\bvec m} \hat V(\bvec m) = 2d\, \op R$ by direct evaluation, $\sum_{\bvec m} \op A^{(2d)}(\bvec m) = 2d\, \hat{\mathbb{I}}$, where $\hat{\mathbb{I}}$ is the identity operator, using Eq.~\eqref{eq:dbld_PPO}. Now, let $1$ be a constant function equal to the value~1 on all of $\mathcal P_{2d}$. Then, $\Op^\twod[1] = d\, \hat{\mathbb{I}}$. This gives the \emph{trace property} for the doubled DWF~\cite{leonhardt_Discrete_1996}:
\begin{align}
\label{eq:dbl_trace_prop}
    \sum_{\bvec m} \bar g(\bvec m)
    =
    \langle 1, \bar g \rangle
    =
    \Tr [\Op^\twod[g]].
\end{align}
Finally, note that the cross-correlation is a shifted inner product:
\begin{align}
\label{eq:cross_cor_as_inner_prod}
    (f \star g)(\bvec m)
    =
    \langle f, g( \inputdot + \bvec m) \rangle
    =
    \langle f ( \inputdot - \bvec m), g \rangle
\end{align}
and is related to convolution~($*$) by reflecting and conjugating the first function: $f * g = f(- \inputdot)^* \star g$.

\subsection{Symplectic discrete Fourier transform~(SDFT)}
The SDFT of a function on the doubled phase space, ${\Fourier : (\mathcal P_{2d} \to \complex) \to (\mathcal P_{2d} \to \complex)}$, notated~$\sft f = \Fourier f$, is
\begin{align}
\label{eq:SDFT}
    \sft f(\bvec m)
&\coloneqq
    \frac 1 {2d}
    \sum_{\bvec m'}
    \omega_{2d}^{-(m_1 m_2' - m_2 m_1')}
    f(\bvec m'),
\end{align}
so named because $(m_1 m_2' - m_2 m_1')$ is the symplectic inner product of~$\bvec m$ and $\bvec m'$. Unlike the ordinary discrete Fourier transform, the SDFT~$\Fourier$ is self-inverse ($\Fourier^2 = \operatorname{id}$). Furthermore, although not obvious, $\Fourier \Proj = \Proj \Fourier$, and thus, $\sft{\bar f} = \bar{\sft f}$ for any function~$f$. (Express $\Fourier \Proj f$ as a double sum, exchange the order of summation, and resum to reveal $\Proj \Fourier f$.) This will be important in deriving the validity conditions on $\sft M$, discussed next.

\subsection{Stencil criteria in the Fourier domain}
The criteria M1--M3 for a stencil~$M$ to be valid become even simpler in terms of its SDFT~$\sft M$ (defined above). That is, a stencil~$M$ is valid if and only if the projection of its SDFT, $\bar{\sft M} = \Proj \Fourier M$, satisfies all three of the following:
\begin{enumerate}[align=left]
    \item[\qquad$\sft{\text M}1$:] $\bar{\sft M}(\bvec{m})^* = \bar{\sft M}(-\bvec{m})$, \label{sftM1}
    
    \item[\qquad$\sft{\text M}2$:] 
    $\, \bar{\sft M}(\bvec 0) = \dfrac 1 {2d}$,
    \label{sftM2}
    
    \item[\qquad$\sft{\text M}3$:] $\bigl\lvert\bar{\sft M}(\bvec m)\bigr\rvert = \dfrac 1 {2d}.$
    \label{sftM3}
\end{enumerate}
These are obtained by taking the SDFT of M1--M3, respectively, to obtain conditions on $\sft{\bar M} = \Fourier \Proj M$ and then noting (see above) that $\sft{\bar M} = \bar{\sft M}$. Criterion~$\sft{\text M}3$ requires two additional step: starting from M3, replace $2\bvec \alpha \mapsto \bvec m$ and then multiply both sides by~$\Delta_2[\bvec m]$ to ensure the condition only applies to even values of~$\bvec m$. This gives $(\bar M \star \bar M)(\bvec m) \Delta_2[\bvec m] = \Delta_{2d}[\bvec m]$ (since $\Delta_{2d}[\bvec m]\Delta_2[\bvec m]$ is just $\Delta_{2d}[\bvec m]$). Then, taking the SDFT of both sides gives $\sum_{\bvec b} \abss{\bar{\sft M}(\bvec m - d \bvec b)}^2=\frac 1 {d^2}$. Finally, using the recurrence relation $\bar{\check M}(\bvec m- d\bvec b) = (-1)^{m_1b_2-m_2b_1-b_1b_2 d} \bar{\check M}(\bvec m)$, obtained through Eq.~\eqref{eq:proj_explicit}, and taking a square root, we obtain $\sft{\text M}3$ as shown.

\subsection{Proof of Theorem~\ref{thm:stencil_theorem}}
We begin by proving part~(b). The cross-correlation defining an $M$-PPO [Eq.~\eqref{eq:AM_def}] can be written as $\op A^M(\bvec \alpha) = \Op^\twod[M(\inputdot - 2 \bvec \alpha)^*]$, using the right-most expression in Eq.~\eqref{eq:dbld_O_op_expanded}. From the first equality in Eq.~\eqref{eq:dbld_O_op_expanded}, $\op O = \Op^\twod[W^\twod_{\hat O}]$ for any~$\op O$, we see that 
\begin{align}
\label{eq:Op_Wig_id}
    \Op^\twod \circ \Wig^\twod = \operatorname{id},
\end{align}
the identity map on operators, and thus,
\begin{align}
    \label{eq:P_on_Wig}
    \Proj \circ \Wig^\twod &= \Wig^\twod, \\
    \label{eq:Op_on_P}
    \Op^\twod \circ \Proj &= \Op^\twod,
\end{align}
which means we can also write
\begin{align}
\label{eq:AM_from_Mbar}
    \op A^M(\bvec \alpha) 
    = 
    \Op^\twod[\bar M(\inputdot - 2 \bvec \alpha)^*]
\end{align}    
(note the bar over~$M$). With this expression in hand, M1 implies that $\op A^M(\bvec \alpha)$ is Hermitian~(A1). M2 and the $W^\twod$ trace property, Eq.~\eqref{eq:dbl_trace_prop}, together imply A2. Finally, M3, along with Eq.~\eqref{eq:cross_cor_as_inner_prod} and the $W^\twod$ inner product property, Eq.~\eqref{eq:dbl_inner_prod}, show that $(\bar M \star \bar M)(2\bvec \alpha) = \frac 1 d \Tr[\op A^M(\bvec 0)^\dag \op A^M(\bvec \alpha)]$. PPO covariance (A4; see footnote~\ref{foot:A2d_WHDO}) and M3 then imply A3. Thus, $\text{M$j$} \Rightarrow \text{A$j$}$ for all~$j$, proving part~(b).

The proof for part~(a) is by construction. We show that if $\{\op A(\bvec \alpha)\}$ is the valid PPO frame for the DWF in question, then the stencil ${M = \Wig^\twod[\op A(\bvec 0)] = \bar M}$ is valid and produces an $M$-DWF identical to the target one. Equation~\eqref{eq:P_on_Wig} verifies the second equality. The doubled DWF, limited to functions in the image of $\Wig^\twod$, is a faithful representation of the Hilbert-Schmidt inner-product space on operators~\cite{leonhardt_Discrete_1996}. Since $\{\op A(\bvec \alpha)\}$ is a valid PPO frame (by the statement of the Theorem), A1--A4 hold, so let us see what this implies in the doubled DWF representation. Since A4 holds, we have (see footnote~\ref{foot:A2d_WHDO})
\begin{align}
\label{eq:Mbar_is_Wig_A}
    \Wig^\twod[\op A(\bvec \alpha)] &= \bar M( \inputdot - 2\bvec \alpha), \\
    \text{and thus}\qquad\qquad
    \op A(\bvec \alpha) &= \Op^\twod[\bar M( \inputdot - 2\bvec \alpha)].
\label{eq:target_A_from_Mbar}
\end{align}
Furthermore, A1 implies M1, A2 implies M2, and A3 and A4 together imply M3, all using the features of the doubled DWF representation~\cite{leonhardt_Discrete_1996}. (We use the same properties as those discussed in the proof of part~(b), above, just with implication arrows reversed.) This proves that $\bar M$ is a valid stencil. Comparing Eqs.~\eqref{eq:AM_from_Mbar} and~\eqref{eq:target_A_from_Mbar}, along with M1, shows that the PPO frame generated by~$\bar M$ is identical to the target one. This proves part~(a) and, hence, the theorem.  \qed

\subsection{Proof of Corollary~\ref{corr:nonunique_stencil}}
The discussion around Eq.~\eqref{eq:AM_from_Mbar} shows that for all stencils~$M$, both~$M$ and its projection~$\bar M$ generate the same PPO frame. Furthermore, if two PPO frames are identical, they have the same PPO at $\bvec \alpha = \bvec 0$. By Eq.~\eqref{eq:Mbar_is_Wig_A}, this means the two projected stencils~$\bar M = \Wig^\twod[\op A(\bvec 0)]$ are identical. 
\qed

\subsection{Reduction and coarse-grain stencils: validity and PPO marginalisation}
For odd~$d$, validity of $M^\RS$ and marginalisation of $\op A^\RS$ follow from their equivalence with Gross' DWF~\cite{gross_Hudsons_2006}, Eq.~\eqref{PPOs_RS_Leon_Cohen_Woot1}.

For even~$d$, to prove validity of~$M^\CGS$, we need~$\bar M^\CGS$, obtained via Eq.~\eqref{eq:proj_explicit}. Since $d$ is even, $(-1)^{n d} = 1$ for any integer~$n$, and with the modular Kronecker enforcing $\bvec m = \bvec b' + d\, \bvec b \pmod {2d}$, we can replace~$m_j \mapsto b'_j$ in the exponent of~$(-1)$, giving
\begin{align}
    \bar M^\CGS_{\text{even $d$}}(\bvec m) &= \frac 1 4 \sum_{\bvec b, \bvec b'} (-1)^{b_1 b'_2 - b_2 b'_1} \Delta_{2d}[\bvec m - \bvec b' - d\, \bvec b],
\end{align}
which is real (M1) and sums to~1 (M2). Now we just need to check M3. The support of $\bar M^\CGS_{\text{even $d$}}$ is dictated entirely by the modular Kronecker condition. Displacing ${\bvec m \mapsto \bvec m + 2 \bvec \alpha}$ gives a function with non-overlapping support unless $2 \bvec \alpha = d\, \bvec b''$. Limiting to this case,
\begin{align}
    &(\bar M^\CGS_{\text{even $d$}} \star \bar M^\CGS_{\text{even $d$}})(d\, \bvec b'') = \frac 1 4 \sum_{\bvec b'} (-1)^{b_1'' b'_2 - b_2'' b'_1} = \Delta_2[\bvec b''],
\end{align}
which is true for all even $d$, verifying M3. Thus, $M^\CGS$ is valid for all even~$d$.

Next, we show that marginalisation holds for $\op A^\CGS$ in even dimensions. Using Eq.~\eqref{eq:AM_def}, we know that~${\op A^\CGS(\bvec \alpha) = \frac 1 2 \sum_{
\bvec b} A^\twod(2 \bvec \alpha + \bvec b)}$, which means that $\op Q^\CGS_V(\alpha_1)$ is just the sum of two neighbouring vertical marginals of the doubled PPO frame:
\begin{align}
    \op Q^\CGS_V(\alpha_1)
    &=
    \sum_{b_1} \op Q^\twod_V(2 \alpha_1 + b_1) 
    = 
    \qoutprod {\alpha_1} {\alpha_1},
\end{align}
where $\op Q^\twod_V(m_1) \coloneqq \frac 1 {2d} \sum_{m_2} \op A^\twod(\bvec m)$, and the second equality comes from the doubled DWF marginalisation property~\cite{leonhardt_Discrete_1996}, specifically $\op Q^\twod_V(m_1) = \Delta_2[m_1] \qoutprod {\frac {m_1} 2} {\frac {m_1} 2}$. Analogous results hold for horizontal PPO marginals and the $p$ basis (eigenstates of~$\op X$) by swapping subscripts on variables (${1 \leftrightarrow 2}$), operators (${V \leftrightarrow H}$), and kets (${q \leftrightarrow p}$). This proves the marginalisation condition for the CGS~PPO.

\subsection{Dirichlet Kernel stencil}
We define, for odd~$d$ only, the $2$-dimensional Dirichlet kernel stencil (DKS), inspired by Ref.~\cite{agam_Semiclassical_1995}:
\begin{align}
    M^\DKS(\bvec{m}) 
&\coloneqq 
    K_d(m_1) K_d(m_2), \\*
\text{where}\qquad
    K_d(m) &\coloneqq \frac
        {\sin(\frac {\pi}{2} m)}
        {d \sin(\frac {\pi}{2d} m)} 
        = 
        d^{-1}
        \sum_{\mathclap{t=-\frac 1 2(d-1)}}^{\mathclap{\frac 1 2 (d-1)}} \omega_{2d}^{tm}.
\label{eq:DK1d}
\end{align}
The summation range for~$t$ corresponds (for odd~$d$) to the condition $\Re \omega_{2d}^t > 0$, so we can define the $2d$-periodic boxcar function, $\Pi:\integers \to \integers_2$,
\begin{align}
    \Pi(k) &\coloneqq 
    H(\Re \omega_{2d}^k)
    =
    \begin{cases}
        1, & \text{if $\Re \omega_{2d}^k > 0$} \\
        0, & \text{otherwise},
    \end{cases}
\end{align}
where $H$ is the Heaviside step function. This reveals $K_d$ as $\sqrt{\frac 2 d}$ times the $2d$ discrete Fourier transform of $\Pi$. Since $\Pi$ is symmetric, the SDFT [Eq.~\eqref{eq:SDFT}] of $M^\DKS$ is then
\begin{align}
    \sft M^\DKS(\bvec m)
    &=
    \frac 2 d
    \Pi(m_1) \Pi(m_2)
    \eqqcolon
    \frac 2 d
    \Pi(\bvec m).
\end{align}
The simplicity of~$\sft M^\DKS$ allows us to illustrate the use of 
the stencil criteria in the Fourier domain, $\sft{\text M}1$--$\sft{\text M}3$. For this, we need the projection of $\sft M^\DKS$,
\begin{align}
    \bar{\sft M}^\DKS
    (\bvec m)
    &=
    \frac 1 4 \sum_{\bvec b} 
    (-1)^{b_1 m_2 -b_2 m_1 - b_1 b_2 d} \sft M^\DKS(\bvec m - d\, \bvec b)
    \nonumber \\
    &=
    \frac 1 {2d} \sum_{\bvec b} 
    (-1)^{b_1 m_2 -b_2 m_1 - b_1 b_2 d} \Pi(\bvec m - d \bvec b) \nonumber \\*
    &=
    \frac 1 {2d} 
    (-1)^{b_1 m_2 -b_2 m_1 - b_1 b_2}
    \Bigr\rvert_{b_j = 1 - \Pi(m_j)}.
\end{align}
Notice $\bar{\sft M}^\DKS(\bvec m) \in \mathbb{R}$ and $\bigl\lvert \bar{\sft M}^\DKS(\bvec m) \bigr\rvert = \bar{\sft M}^\DKS (\bvec 0) = \frac 1 {2d}$ for all~$\bvec m \in \mathcal P_{2d}$, verifying $\sft{\text M}1$--$\sft{\text M}3$. Thus, $M^\DKS$ is valid for odd~$d$. We leave to future work a discussion of its additional properties and comparison with other stencils.

\vfill

\bibliography{DWF_Paper_1}

%apsrev4-2.bst 2019-01-14 (MD) hand-edited version of apsrev4-1.bst
%Control: key (0)
%Control: author (8) initials jnrlst
%Control: editor formatted (1) identically to author
%Control: production of article title (0) allowed
%Control: page (0) single
%Control: year (1) truncated
%Control: production of eprint (0) enabled
\begin{thebibliography}{68}%
\makeatletter
\providecommand \@ifxundefined [1]{%
 \@ifx{#1\undefined}
}%
\providecommand \@ifnum [1]{%
 \ifnum #1\expandafter \@firstoftwo
 \else \expandafter \@secondoftwo
 \fi
}%
\providecommand \@ifx [1]{%
 \ifx #1\expandafter \@firstoftwo
 \else \expandafter \@secondoftwo
 \fi
}%
\providecommand \natexlab [1]{#1}%
\providecommand \enquote  [1]{``#1''}%
\providecommand \bibnamefont  [1]{#1}%
\providecommand \bibfnamefont [1]{#1}%
\providecommand \citenamefont [1]{#1}%
\providecommand \href@noop [0]{\@secondoftwo}%
\providecommand \href [0]{\begingroup \@sanitize@url \@href}%
\providecommand \@href[1]{\@@startlink{#1}\@@href}%
\providecommand \@@href[1]{\endgroup#1\@@endlink}%
\providecommand \@sanitize@url [0]{\catcode `\\12\catcode `\$12\catcode `\&12\catcode `\#12\catcode `\^12\catcode `\_12\catcode `\%12\relax}%
\providecommand \@@startlink[1]{}%
\providecommand \@@endlink[0]{}%
\providecommand \url  [0]{\begingroup\@sanitize@url \@url }%
\providecommand \@url [1]{\endgroup\@href {#1}{\urlprefix }}%
\providecommand \urlprefix  [0]{URL }%
\providecommand \Eprint [0]{\href }%
\providecommand \doibase [0]{https://doi.org/}%
\providecommand \selectlanguage [0]{\@gobble}%
\providecommand \bibinfo  [0]{\@secondoftwo}%
\providecommand \bibfield  [0]{\@secondoftwo}%
\providecommand \translation [1]{[#1]}%
\providecommand \BibitemOpen [0]{}%
\providecommand \bibitemStop [0]{}%
\providecommand \bibitemNoStop [0]{.\EOS\space}%
\providecommand \EOS [0]{\spacefactor3000\relax}%
\providecommand \BibitemShut  [1]{\csname bibitem#1\endcsname}%
\let\auto@bib@innerbib\@empty
%</preamble>
\bibitem [{\citenamefont {Wigner}(1932)}]{wigner_Quantum_1932}%
  \BibitemOpen
  \bibfield  {author} {\bibinfo {author} {\bibfnamefont {E.}~\bibnamefont {Wigner}},\ }\bibfield  {title} {\bibinfo {title} {On the {{Quantum Correction For Thermodynamic Equilibrium}}},\ }\href {https://doi.org/10.1103/PhysRev.40.749} {\bibfield  {journal} {\bibinfo  {journal} {Physical Review}\ }\textbf {\bibinfo {volume} {40}},\ \bibinfo {pages} {749} (\bibinfo {year} {1932})}\BibitemShut {NoStop}%
\bibitem [{\citenamefont {Miquel}\ \emph {et~al.}(2002)\citenamefont {Miquel}, \citenamefont {Paz},\ and\ \citenamefont {Saraceno}}]{miquel_Quantum_2002}%
  \BibitemOpen
  \bibfield  {author} {\bibinfo {author} {\bibfnamefont {C.}~\bibnamefont {Miquel}}, \bibinfo {author} {\bibfnamefont {J.~P.}\ \bibnamefont {Paz}},\ and\ \bibinfo {author} {\bibfnamefont {M.}~\bibnamefont {Saraceno}},\ }\bibfield  {title} {\bibinfo {title} {Quantum computers in phase space},\ }\href {https://doi.org/10.1103/PhysRevA.65.062309} {\bibfield  {journal} {\bibinfo  {journal} {Physical Review A}\ }\textbf {\bibinfo {volume} {65}},\ \bibinfo {pages} {062309} (\bibinfo {year} {2002})}\BibitemShut {NoStop}%
\bibitem [{\citenamefont {Chaturvedi}\ \emph {et~al.}(2010)\citenamefont {Chaturvedi}, \citenamefont {Mukunda},\ and\ \citenamefont {Simon}}]{chaturvedi_Wigner_2010}%
  \BibitemOpen
  \bibfield  {author} {\bibinfo {author} {\bibfnamefont {S.}~\bibnamefont {Chaturvedi}}, \bibinfo {author} {\bibfnamefont {N.}~\bibnamefont {Mukunda}},\ and\ \bibinfo {author} {\bibfnamefont {R.}~\bibnamefont {Simon}},\ }\bibfield  {title} {\bibinfo {title} {Wigner distributions for finite-state systems without redundant phase-point operators},\ }\href {https://doi.org/10.1088/1751-8113/43/7/075302} {\bibfield  {journal} {\bibinfo  {journal} {Journal of Physics A: Mathematical and Theoretical}\ }\textbf {\bibinfo {volume} {43}},\ \bibinfo {pages} {075302} (\bibinfo {year} {2010})}\BibitemShut {NoStop}%
\bibitem [{\citenamefont {Chaturvedi}\ \emph {et~al.}(2006)\citenamefont {Chaturvedi}, \citenamefont {Ercolessi}, \citenamefont {Marmo}, \citenamefont {Morandi}, \citenamefont {Mukunda},\ and\ \citenamefont {Simon}}]{Chaturvedi_Wigner_Weyl_2006}%
  \BibitemOpen
  \bibfield  {author} {\bibinfo {author} {\bibfnamefont {S.}~\bibnamefont {Chaturvedi}}, \bibinfo {author} {\bibfnamefont {E.}~\bibnamefont {Ercolessi}}, \bibinfo {author} {\bibfnamefont {G.}~\bibnamefont {Marmo}}, \bibinfo {author} {\bibfnamefont {G.}~\bibnamefont {Morandi}}, \bibinfo {author} {\bibfnamefont {N.}~\bibnamefont {Mukunda}},\ and\ \bibinfo {author} {\bibfnamefont {R.}~\bibnamefont {Simon}},\ }\bibfield  {title} {\bibinfo {title} {Wigner–weyl correspondence in quantum mechanics for continuous and discrete systems—a dirac-inspired view},\ }\href {https://doi.org/10.1088/0305-4470/39/6/014} {\bibfield  {journal} {\bibinfo  {journal} {Journal of Physics A: Mathematical and General}\ }\textbf {\bibinfo {volume} {39}},\ \bibinfo {pages} {1405} (\bibinfo {year} {2006})}\BibitemShut {NoStop}%
\bibitem [{\citenamefont {Ferrie}\ and\ \citenamefont {Emerson}(2009)}]{Ferrie_frames_2009}%
  \BibitemOpen
  \bibfield  {author} {\bibinfo {author} {\bibfnamefont {C.}~\bibnamefont {Ferrie}}\ and\ \bibinfo {author} {\bibfnamefont {J.}~\bibnamefont {Emerson}},\ }\bibfield  {title} {\bibinfo {title} {Framed {Hilbert} space: hanging the quasi-probability pictures of quantum theory},\ }\href {https://doi.org/10.1088/1367-2630/11/6/063040} {\bibfield  {journal} {\bibinfo  {journal} {New Journal of Physics}\ }\textbf {\bibinfo {volume} {11}},\ \bibinfo {pages} {063040} (\bibinfo {year} {2009})}\BibitemShut {NoStop}%
\bibitem [{\citenamefont {Tilma}\ \emph {et~al.}(2016)\citenamefont {Tilma}, \citenamefont {Everitt}, \citenamefont {Samson}, \citenamefont {Munro},\ and\ \citenamefont {Nemoto}}]{tilma_Wigner_2016}%
  \BibitemOpen
  \bibfield  {author} {\bibinfo {author} {\bibfnamefont {T.}~\bibnamefont {Tilma}}, \bibinfo {author} {\bibfnamefont {M.~J.}\ \bibnamefont {Everitt}}, \bibinfo {author} {\bibfnamefont {J.~H.}\ \bibnamefont {Samson}}, \bibinfo {author} {\bibfnamefont {W.~J.}\ \bibnamefont {Munro}},\ and\ \bibinfo {author} {\bibfnamefont {K.}~\bibnamefont {Nemoto}},\ }\bibfield  {title} {\bibinfo {title} {Wigner {{Functions}} for {{Arbitrary Quantum Systems}}},\ }\href {https://doi.org/10.1103/PhysRevLett.117.180401} {\bibfield  {journal} {\bibinfo  {journal} {Physical Review Letters}\ }\textbf {\bibinfo {volume} {117}},\ \bibinfo {pages} {180401} (\bibinfo {year} {2016})}\BibitemShut {NoStop}%
\bibitem [{\citenamefont {Luis}\ and\ \citenamefont {Perina}(1998)}]{luis_Discrete_1998}%
  \BibitemOpen
  \bibfield  {author} {\bibinfo {author} {\bibfnamefont {A.}~\bibnamefont {Luis}}\ and\ \bibinfo {author} {\bibfnamefont {J.}~\bibnamefont {Perina}},\ }\bibfield  {title} {\bibinfo {title} {Discrete {{Wigner}} function for finite-dimensional systems},\ }\href {https://doi.org/10.1088/0305-4470/31/5/012} {\bibfield  {journal} {\bibinfo  {journal} {Journal of Physics A: Mathematical and General}\ }\textbf {\bibinfo {volume} {31}},\ \bibinfo {pages} {1423} (\bibinfo {year} {1998})}\BibitemShut {NoStop}%
\bibitem [{\citenamefont {Leonhardt}(1996)}]{leonhardt_Discrete_1996}%
  \BibitemOpen
  \bibfield  {author} {\bibinfo {author} {\bibfnamefont {U.}~\bibnamefont {Leonhardt}},\ }\bibfield  {title} {\bibinfo {title} {Discrete {{Wigner}} function and quantum-state tomography},\ }\href {https://doi.org/10.1103/PhysRevA.53.2998} {\bibfield  {journal} {\bibinfo  {journal} {Physical Review A}\ }\textbf {\bibinfo {volume} {53}},\ \bibinfo {pages} {2998} (\bibinfo {year} {1996})}\BibitemShut {NoStop}%
\bibitem [{\citenamefont {Horibe}\ \emph {et~al.}(2002)\citenamefont {Horibe}, \citenamefont {Takami}, \citenamefont {Hashimoto},\ and\ \citenamefont {Hayashi}}]{horibe_Existence_2002}%
  \BibitemOpen
  \bibfield  {author} {\bibinfo {author} {\bibfnamefont {M.}~\bibnamefont {Horibe}}, \bibinfo {author} {\bibfnamefont {A.}~\bibnamefont {Takami}}, \bibinfo {author} {\bibfnamefont {T.}~\bibnamefont {Hashimoto}},\ and\ \bibinfo {author} {\bibfnamefont {A.}~\bibnamefont {Hayashi}},\ }\bibfield  {title} {\bibinfo {title} {Existence of the {{Wigner}} function with correct marginal distributions along tilted lines on a lattice},\ }\href {https://doi.org/10.1103/PhysRevA.65.032105} {\bibfield  {journal} {\bibinfo  {journal} {Physical Review A}\ }\textbf {\bibinfo {volume} {65}},\ \bibinfo {pages} {032105} (\bibinfo {year} {2002})}\BibitemShut {NoStop}%
\bibitem [{\citenamefont {Takami}\ \emph {et~al.}(2001)\citenamefont {Takami}, \citenamefont {Hashimoto}, \citenamefont {Horibe},\ and\ \citenamefont {Hayashi}}]{takami_Wigner_2001}%
  \BibitemOpen
  \bibfield  {author} {\bibinfo {author} {\bibfnamefont {A.}~\bibnamefont {Takami}}, \bibinfo {author} {\bibfnamefont {T.}~\bibnamefont {Hashimoto}}, \bibinfo {author} {\bibfnamefont {M.}~\bibnamefont {Horibe}},\ and\ \bibinfo {author} {\bibfnamefont {A.}~\bibnamefont {Hayashi}},\ }\bibfield  {title} {\bibinfo {title} {Wigner functions on a lattice},\ }\href {https://doi.org/10.1103/PhysRevA.64.032114} {\bibfield  {journal} {\bibinfo  {journal} {Physical Review A}\ }\textbf {\bibinfo {volume} {64}},\ \bibinfo {pages} {032114} (\bibinfo {year} {2001})}\BibitemShut {NoStop}%
\bibitem [{\citenamefont {Gibbons}\ \emph {et~al.}(2004)\citenamefont {Gibbons}, \citenamefont {Hoffman},\ and\ \citenamefont {Wootters}}]{gibbons_Discrete_2004}%
  \BibitemOpen
  \bibfield  {author} {\bibinfo {author} {\bibfnamefont {K.~S.}\ \bibnamefont {Gibbons}}, \bibinfo {author} {\bibfnamefont {M.~J.}\ \bibnamefont {Hoffman}},\ and\ \bibinfo {author} {\bibfnamefont {W.~K.}\ \bibnamefont {Wootters}},\ }\bibfield  {title} {\bibinfo {title} {Discrete phase space based on finite fields},\ }\href {https://doi.org/10.1103/PhysRevA.70.062101} {\bibfield  {journal} {\bibinfo  {journal} {Physical Review A}\ }\textbf {\bibinfo {volume} {70}},\ \bibinfo {pages} {062101} (\bibinfo {year} {2004})}\BibitemShut {NoStop}%
\bibitem [{\citenamefont {Agam}\ and\ \citenamefont {Brenner}(1995)}]{agam_Semiclassical_1995}%
  \BibitemOpen
  \bibfield  {author} {\bibinfo {author} {\bibfnamefont {O.}~\bibnamefont {Agam}}\ and\ \bibinfo {author} {\bibfnamefont {N.}~\bibnamefont {Brenner}},\ }\bibfield  {title} {\bibinfo {title} {Semiclassical {{Wigner}} functions for quantum maps on a torus},\ }\href {https://doi.org/10.1088/0305-4470/28/5/020} {\bibfield  {journal} {\bibinfo  {journal} {Journal of Physics A: Mathematical and General}\ }\textbf {\bibinfo {volume} {28}},\ \bibinfo {pages} {1345} (\bibinfo {year} {1995})}\BibitemShut {NoStop}%
\bibitem [{\citenamefont {Saraceno}\ and\ \citenamefont {de~Almeida}(2019)}]{saraceno_Translations_2019}%
  \BibitemOpen
  \bibfield  {author} {\bibinfo {author} {\bibfnamefont {M.}~\bibnamefont {Saraceno}}\ and\ \bibinfo {author} {\bibfnamefont {A.~M.~O.}\ \bibnamefont {de~Almeida}},\ }\bibfield  {title} {\bibinfo {title} {Translations and reflections on the torus: Identities for discrete {{Wigner}} functions and transforms},\ }\href {https://doi.org/10.1088/1751-8121/aafdc2} {\bibfield  {journal} {\bibinfo  {journal} {Journal of Physics A: Mathematical and Theoretical}\ }\textbf {\bibinfo {volume} {52}},\ \bibinfo {pages} {095301} (\bibinfo {year} {2019})}\BibitemShut {NoStop}%
\bibitem [{\citenamefont {Bianucci}\ \emph {et~al.}(2002)\citenamefont {Bianucci}, \citenamefont {Miquel}, \citenamefont {Paz},\ and\ \citenamefont {Saraceno}}]{bianucci_Discrete_2002}%
  \BibitemOpen
  \bibfield  {author} {\bibinfo {author} {\bibfnamefont {P.}~\bibnamefont {Bianucci}}, \bibinfo {author} {\bibfnamefont {C.}~\bibnamefont {Miquel}}, \bibinfo {author} {\bibfnamefont {J.}~\bibnamefont {Paz}},\ and\ \bibinfo {author} {\bibfnamefont {M.}~\bibnamefont {Saraceno}},\ }\bibfield  {title} {\bibinfo {title} {Discrete wigner functions and the phase space representation of quantum computers},\ }\href {https://doi.org/https://doi.org/10.1016/S0375-9601(02)00391-2} {\bibfield  {journal} {\bibinfo  {journal} {Physics Letters A}\ }\textbf {\bibinfo {volume} {297}},\ \bibinfo {pages} {353} (\bibinfo {year} {2002})}\BibitemShut {NoStop}%
\bibitem [{\citenamefont {Zak}(2011)}]{zak_Doubling_2011}%
  \BibitemOpen
  \bibfield  {author} {\bibinfo {author} {\bibfnamefont {J.}~\bibnamefont {Zak}},\ }\bibfield  {title} {\bibinfo {title} {Doubling feature of the {{Wigner}} function: Finite phase space},\ }\href {https://doi.org/10.1088/1751-8113/44/34/345305} {\bibfield  {journal} {\bibinfo  {journal} {Journal of Physics A: Mathematical and Theoretical}\ }\textbf {\bibinfo {volume} {44}},\ \bibinfo {pages} {345305} (\bibinfo {year} {2011})}\BibitemShut {NoStop}%
\bibitem [{\citenamefont {Raussendorf}\ \emph {et~al.}(2023)\citenamefont {Raussendorf}, \citenamefont {Okay}, \citenamefont {Zurel},\ and\ \citenamefont {Feldmann}}]{raussendorf_Role_2023}%
  \BibitemOpen
  \bibfield  {author} {\bibinfo {author} {\bibfnamefont {R.}~\bibnamefont {Raussendorf}}, \bibinfo {author} {\bibfnamefont {C.}~\bibnamefont {Okay}}, \bibinfo {author} {\bibfnamefont {M.}~\bibnamefont {Zurel}},\ and\ \bibinfo {author} {\bibfnamefont {P.}~\bibnamefont {Feldmann}},\ }\bibfield  {title} {\bibinfo {title} {The role of cohomology in quantum computation with magic states},\ }\href {https://doi.org/10.22331/q-2023-04-13-979} {\bibfield  {journal} {\bibinfo  {journal} {Quantum}\ }\textbf {\bibinfo {volume} {7}},\ \bibinfo {pages} {979} (\bibinfo {year} {2023})}\BibitemShut {NoStop}%
\bibitem [{\citenamefont {Gross}(2006)}]{gross_Hudsons_2006}%
  \BibitemOpen
  \bibfield  {author} {\bibinfo {author} {\bibfnamefont {D.}~\bibnamefont {Gross}},\ }\bibfield  {title} {\bibinfo {title} {Hudson's theorem for finite-dimensional quantum systems},\ }\href {https://doi.org/10.1063/1.2393152} {\bibfield  {journal} {\bibinfo  {journal} {Journal of Mathematical Physics}\ }\textbf {\bibinfo {volume} {47}},\ \bibinfo {pages} {122107} (\bibinfo {year} {2006})}\BibitemShut {NoStop}%
\bibitem [{\citenamefont {Royer}(1977)}]{royer_Wigner_1977}%
  \BibitemOpen
  \bibfield  {author} {\bibinfo {author} {\bibfnamefont {A.}~\bibnamefont {Royer}},\ }\bibfield  {title} {\bibinfo {title} {Wigner function as the expectation value of a parity operator},\ }\href {https://doi.org/10.1103/PhysRevA.15.449} {\bibfield  {journal} {\bibinfo  {journal} {Physical Review A}\ }\textbf {\bibinfo {volume} {15}},\ \bibinfo {pages} {449} (\bibinfo {year} {1977})}\BibitemShut {NoStop}%
\bibitem [{\citenamefont {Vourdas}(2004)}]{vourdas_Quantum_2004}%
  \BibitemOpen
  \bibfield  {author} {\bibinfo {author} {\bibfnamefont {A.}~\bibnamefont {Vourdas}},\ }\bibfield  {title} {\bibinfo {title} {Quantum systems with finite {{Hilbert}} space},\ }\href {https://doi.org/10.1088/0034-4885/67/3/R03} {\bibfield  {journal} {\bibinfo  {journal} {Reports on Progress in Physics}\ }\textbf {\bibinfo {volume} {67}},\ \bibinfo {pages} {267} (\bibinfo {year} {2004})}\BibitemShut {NoStop}%
\bibitem [{\citenamefont {Lalita}\ \emph {et~al.}(2023)\citenamefont {Lalita}, \citenamefont {Paulson},\ and\ \citenamefont {Banerjee}}]{lalita2023harnessing}%
  \BibitemOpen
  \bibfield  {author} {\bibinfo {author} {\bibfnamefont {J.}~\bibnamefont {Lalita}}, \bibinfo {author} {\bibfnamefont {K.}~\bibnamefont {Paulson}},\ and\ \bibinfo {author} {\bibfnamefont {S.}~\bibnamefont {Banerjee}},\ }\bibfield  {title} {\bibinfo {title} {Harnessing quantumness of states using discrete wigner functions under (non)-{M}arkovian quantum channels},\ }\href {https://doi.org/10.1002/andp.202300139} {\bibfield  {journal} {\bibinfo  {journal} {Annalen der physik}\ }\textbf {\bibinfo {volume} {535}},\ \bibinfo {pages} {2300139} (\bibinfo {year} {2023})}\BibitemShut {NoStop}%
\bibitem [{\citenamefont {Abgaryan}\ and\ \citenamefont {Khvedelidze}(2021)}]{abgaryan2021families}%
  \BibitemOpen
  \bibfield  {author} {\bibinfo {author} {\bibfnamefont {V.}~\bibnamefont {Abgaryan}}\ and\ \bibinfo {author} {\bibfnamefont {A.}~\bibnamefont {Khvedelidze}},\ }\bibfield  {title} {\bibinfo {title} {On families of wigner functions for n-level quantum systems},\ }\href {https://doi.org/10.3390/sym13061013} {\bibfield  {journal} {\bibinfo  {journal} {Symmetry}\ }\textbf {\bibinfo {volume} {13}},\ \bibinfo {pages} {1013} (\bibinfo {year} {2021})}\BibitemShut {NoStop}%
\bibitem [{\citenamefont {Rundle}\ and\ \citenamefont {Everitt}(2021)}]{rundle_Overview_2021}%
  \BibitemOpen
  \bibfield  {author} {\bibinfo {author} {\bibfnamefont {R.~P.}\ \bibnamefont {Rundle}}\ and\ \bibinfo {author} {\bibfnamefont {M.~J.}\ \bibnamefont {Everitt}},\ }\bibfield  {title} {\bibinfo {title} {Overview of the {{Phase Space Formulation}} of {{Quantum Mechanics}} with {{Application}} to {{Quantum Technologies}}},\ }\href {https://doi.org/10.1002/qute.202100016} {\bibfield  {journal} {\bibinfo  {journal} {Advanced Quantum Technologies}\ }\textbf {\bibinfo {volume} {4}},\ \bibinfo {pages} {2100016} (\bibinfo {year} {2021})}\BibitemShut {NoStop}%
\bibitem [{\citenamefont {Heinrich}\ and\ \citenamefont {Gross}(2019)}]{Heinrich_robustness_2019}%
  \BibitemOpen
  \bibfield  {author} {\bibinfo {author} {\bibfnamefont {M.}~\bibnamefont {Heinrich}}\ and\ \bibinfo {author} {\bibfnamefont {D.}~\bibnamefont {Gross}},\ }\bibfield  {title} {\bibinfo {title} {Robustness of {M}agic and {S}ymmetries of the {S}tabiliser {P}olytope},\ }\href {https://doi.org/10.22331/q-2019-04-08-132} {\bibfield  {journal} {\bibinfo  {journal} {{Quantum}}\ }\textbf {\bibinfo {volume} {3}},\ \bibinfo {pages} {132} (\bibinfo {year} {2019})}\BibitemShut {NoStop}%
\bibitem [{\citenamefont {Galv\~ao}(2005)}]{Galvao_discrete_2005}%
  \BibitemOpen
  \bibfield  {author} {\bibinfo {author} {\bibfnamefont {E.~F.}\ \bibnamefont {Galv\~ao}},\ }\bibfield  {title} {\bibinfo {title} {Discrete wigner functions and quantum computational speedup},\ }\href {https://doi.org/10.1103/PhysRevA.71.042302} {\bibfield  {journal} {\bibinfo  {journal} {Phys. Rev. A}\ }\textbf {\bibinfo {volume} {71}},\ \bibinfo {pages} {042302} (\bibinfo {year} {2005})}\BibitemShut {NoStop}%
\bibitem [{\citenamefont {Mari}\ and\ \citenamefont {Eisert}(2012)}]{mari_Positive_2012}%
  \BibitemOpen
  \bibfield  {author} {\bibinfo {author} {\bibfnamefont {A.}~\bibnamefont {Mari}}\ and\ \bibinfo {author} {\bibfnamefont {J.}~\bibnamefont {Eisert}},\ }\bibfield  {title} {\bibinfo {title} {Positive {{Wigner Functions Render Classical Simulation}} of {{Quantum Computation Efficient}}},\ }\href {https://doi.org/10.1103/PhysRevLett.109.230503} {\bibfield  {journal} {\bibinfo  {journal} {Physical Review Letters}\ }\textbf {\bibinfo {volume} {109}},\ \bibinfo {pages} {230503} (\bibinfo {year} {2012})}\BibitemShut {NoStop}%
\bibitem [{\citenamefont {Howard}\ \emph {et~al.}(2014)\citenamefont {Howard}, \citenamefont {Wallman}, \citenamefont {Veitch},\ and\ \citenamefont {Emerson}}]{Howard_Wallman_Veitch_Emerson_2014}%
  \BibitemOpen
  \bibfield  {author} {\bibinfo {author} {\bibfnamefont {M.}~\bibnamefont {Howard}}, \bibinfo {author} {\bibfnamefont {J.}~\bibnamefont {Wallman}}, \bibinfo {author} {\bibfnamefont {V.}~\bibnamefont {Veitch}},\ and\ \bibinfo {author} {\bibfnamefont {J.}~\bibnamefont {Emerson}},\ }\bibfield  {title} {\bibinfo {title} {Contextuality supplies the ‘magic’ for quantum computation},\ }\href {https://doi.org/10.1038/nature13460} {\bibfield  {journal} {\bibinfo  {journal} {Nature}\ }\textbf {\bibinfo {volume} {510}},\ \bibinfo {pages} {351–355} (\bibinfo {year} {2014})}\BibitemShut {NoStop}%
\bibitem [{\citenamefont {Pashayan}\ \emph {et~al.}(2015)\citenamefont {Pashayan}, \citenamefont {Wallman},\ and\ \citenamefont {Bartlett}}]{Pashayan_2015}%
  \BibitemOpen
  \bibfield  {author} {\bibinfo {author} {\bibfnamefont {H.}~\bibnamefont {Pashayan}}, \bibinfo {author} {\bibfnamefont {J.~J.}\ \bibnamefont {Wallman}},\ and\ \bibinfo {author} {\bibfnamefont {S.~D.}\ \bibnamefont {Bartlett}},\ }\bibfield  {title} {\bibinfo {title} {Estimating outcome probabilities of quantum circuits using quasiprobabilities},\ }\href {https://doi.org/10.1103/PhysRevLett.115.070501} {\bibfield  {journal} {\bibinfo  {journal} {Phys. Rev. Lett.}\ }\textbf {\bibinfo {volume} {115}},\ \bibinfo {pages} {070501} (\bibinfo {year} {2015})}\BibitemShut {NoStop}%
\bibitem [{\citenamefont {Hahn}\ \emph {et~al.}(2025)\citenamefont {Hahn}, \citenamefont {Ferrini},\ and\ \citenamefont {Takagi}}]{hahn_bridging_2025}%
  \BibitemOpen
  \bibfield  {author} {\bibinfo {author} {\bibfnamefont {O.}~\bibnamefont {Hahn}}, \bibinfo {author} {\bibfnamefont {G.}~\bibnamefont {Ferrini}},\ and\ \bibinfo {author} {\bibfnamefont {R.}~\bibnamefont {Takagi}},\ }\bibfield  {title} {\bibinfo {title} {Bridging magic and non-gaussian resources via gottesman-kitaev-preskill encoding},\ }\href {https://doi.org/10.1103/PRXQuantum.6.010330} {\bibfield  {journal} {\bibinfo  {journal} {PRX Quantum}\ }\textbf {\bibinfo {volume} {6}},\ \bibinfo {pages} {010330} (\bibinfo {year} {2025})}\BibitemShut {NoStop}%
\bibitem [{\citenamefont {Davis}\ \emph {et~al.}(2024)\citenamefont {Davis}, \citenamefont {Fabre},\ and\ \citenamefont {Chabaud}}]{davis_Identifying_2024}%
  \BibitemOpen
  \bibfield  {author} {\bibinfo {author} {\bibfnamefont {J.}~\bibnamefont {Davis}}, \bibinfo {author} {\bibfnamefont {N.}~\bibnamefont {Fabre}},\ and\ \bibinfo {author} {\bibfnamefont {U.}~\bibnamefont {Chabaud}},\ }\href {https://doi.org/10.48550/arXiv.2407.18394} {\bibinfo {title} {Identifying quantum resources in encoded computations}} (\bibinfo {year} {2024}),\ \Eprint {https://arxiv.org/abs/2407.18394} {arXiv:2407.18394 [quant-ph]} \BibitemShut {NoStop}%
\bibitem [{\citenamefont {Cusumano}\ \emph {et~al.}(2025)\citenamefont {Cusumano}, \citenamefont {Venuti}, \citenamefont {Cepollaro}, \citenamefont {Esposito}, \citenamefont {Iannotti}, \citenamefont {Jasser}, \citenamefont {c}, \citenamefont {Viscardi},\ and\ \citenamefont {Hamma}}]{Cusumano_Non_stabilizerness_2025}%
  \BibitemOpen
  \bibfield  {author} {\bibinfo {author} {\bibfnamefont {S.}~\bibnamefont {Cusumano}}, \bibinfo {author} {\bibfnamefont {L.~C.}\ \bibnamefont {Venuti}}, \bibinfo {author} {\bibfnamefont {S.}~\bibnamefont {Cepollaro}}, \bibinfo {author} {\bibfnamefont {G.}~\bibnamefont {Esposito}}, \bibinfo {author} {\bibfnamefont {D.}~\bibnamefont {Iannotti}}, \bibinfo {author} {\bibfnamefont {B.}~\bibnamefont {Jasser}}, \bibinfo {author} {\bibfnamefont {J.~O.}\ \bibnamefont {c}}, \bibinfo {author} {\bibfnamefont {M.}~\bibnamefont {Viscardi}},\ and\ \bibinfo {author} {\bibfnamefont {A.}~\bibnamefont {Hamma}},\ }\href@noop {} {\bibinfo {title} {{Non-stabilizerness and violations of CHSH inequalities}}} (\bibinfo {year} {2025}),\ \Eprint {https://arxiv.org/abs/2504.03351} {arXiv:2504.03351 [quant-ph]} \BibitemShut {NoStop}%
\bibitem [{\citenamefont {Bj\"{o}rk}\ \emph {et~al.}(2008)\citenamefont {Bj\"{o}rk}, \citenamefont {Klimov},\ and\ \citenamefont {S\'{a}nchez-Soto}}]{Bjork_Klimov_Sanchez_Soto_2008}%
  \BibitemOpen
  \bibfield  {author} {\bibinfo {author} {\bibfnamefont {G.}~\bibnamefont {Bj\"{o}rk}}, \bibinfo {author} {\bibfnamefont {A.~B.}\ \bibnamefont {Klimov}},\ and\ \bibinfo {author} {\bibfnamefont {L.~L.}\ \bibnamefont {S\'{a}nchez-Soto}},\ }\bibinfo {title} {The discrete {Wigner} function},\ in\ \href {https://doi.org/10.1016/S0079-6638(07)51007-3} {\emph {\bibinfo {booktitle} {Progress in Optics}}},\ Vol.~\bibinfo {volume} {51}\ (\bibinfo  {publisher} {Elsevier},\ \bibinfo {year} {2008})\ p.\ \bibinfo {pages} {469–516}\BibitemShut {NoStop}%
\bibitem [{\citenamefont {Weyl}(1927)}]{weyl_quantenmechanik_1927}%
  \BibitemOpen
  \bibfield  {author} {\bibinfo {author} {\bibfnamefont {H.}~\bibnamefont {Weyl}},\ }\bibfield  {title} {\bibinfo {title} {Quantenmechanik und gruppentheorie},\ }\href {https://doi.org/10.1007/BF02055756} {\bibfield  {journal} {\bibinfo  {journal} {Zeitschrift f{\"u}r Physik}\ }\textbf {\bibinfo {volume} {46}},\ \bibinfo {pages} {1} (\bibinfo {year} {1927})}\BibitemShut {NoStop}%
\bibitem [{\citenamefont {Ferrie}\ and\ \citenamefont {Emerson}(2008)}]{Ferrie_frames_2008}%
  \BibitemOpen
  \bibfield  {author} {\bibinfo {author} {\bibfnamefont {C.}~\bibnamefont {Ferrie}}\ and\ \bibinfo {author} {\bibfnamefont {J.}~\bibnamefont {Emerson}},\ }\bibfield  {title} {\bibinfo {title} {Frame representations of quantum mechanics and the necessity of negativity in quasi-probability representations},\ }\href {https://doi.org/10.1088/1751-8113/41/35/352001} {\bibfield  {journal} {\bibinfo  {journal} {Journal of Physics A: Mathematical and Theoretical}\ }\textbf {\bibinfo {volume} {41}},\ \bibinfo {pages} {352001} (\bibinfo {year} {2008})}\BibitemShut {NoStop}%
\bibitem [{\citenamefont {Glauber}(1963)}]{Glauber_function_1963}%
  \BibitemOpen
  \bibfield  {author} {\bibinfo {author} {\bibfnamefont {R.~J.}\ \bibnamefont {Glauber}},\ }\bibfield  {title} {\bibinfo {title} {Coherent and incoherent states of the radiation field},\ }\href {https://doi.org/10.1103/PhysRev.131.2766} {\bibfield  {journal} {\bibinfo  {journal} {Phys. Rev.}\ }\textbf {\bibinfo {volume} {131}},\ \bibinfo {pages} {2766} (\bibinfo {year} {1963})}\BibitemShut {NoStop}%
\bibitem [{\citenamefont {Sudarshan}(1963)}]{Sudarshan_function_1963}%
  \BibitemOpen
  \bibfield  {author} {\bibinfo {author} {\bibfnamefont {E.~C.~G.}\ \bibnamefont {Sudarshan}},\ }\bibfield  {title} {\bibinfo {title} {Equivalence of semiclassical and quantum mechanical descriptions of statistical light beams},\ }\href {https://doi.org/10.1103/PhysRevLett.10.277} {\bibfield  {journal} {\bibinfo  {journal} {Phys. Rev. Lett.}\ }\textbf {\bibinfo {volume} {10}},\ \bibinfo {pages} {277} (\bibinfo {year} {1963})}\BibitemShut {NoStop}%
\bibitem [{\citenamefont {Drummond}\ and\ \citenamefont {Gardiner}(1980)}]{drummond_Generalised_1980}%
  \BibitemOpen
  \bibfield  {author} {\bibinfo {author} {\bibfnamefont {P.~D.}\ \bibnamefont {Drummond}}\ and\ \bibinfo {author} {\bibfnamefont {C.~W.}\ \bibnamefont {Gardiner}},\ }\bibfield  {title} {\bibinfo {title} {Generalised {{P-representations}} in quantum optics},\ }\href {https://doi.org/10.1088/0305-4470/13/7/018} {\bibfield  {journal} {\bibinfo  {journal} {Journal of Physics A: Mathematical and General}\ }\textbf {\bibinfo {volume} {13}},\ \bibinfo {pages} {2353} (\bibinfo {year} {1980})}\BibitemShut {NoStop}%
\bibitem [{\citenamefont {Husimi}(1940)}]{husimi_Formal_1940}%
  \BibitemOpen
  \bibfield  {author} {\bibinfo {author} {\bibfnamefont {K.}~\bibnamefont {Husimi}},\ }\bibfield  {title} {\bibinfo {title} {Some {{Formal Properties}} of the {{Density Matrix}}},\ }\href {https://doi.org/10.11429/ppmsj1919.22.4_264} {\bibfield  {journal} {\bibinfo  {journal} {Proceedings of the Physico-Mathematical Society of Japan. 3rd Series}\ }\textbf {\bibinfo {volume} {22}},\ \bibinfo {pages} {264} (\bibinfo {year} {1940})}\BibitemShut {NoStop}%
\bibitem [{\citenamefont {Appleby}(1999)}]{appleby_Generalized_Husimi_1999}%
  \BibitemOpen
  \bibfield  {author} {\bibinfo {author} {\bibfnamefont {D.~M.}\ \bibnamefont {Appleby}},\ }\bibfield  {title} {\bibinfo {title} {Generalized husimi functions: Analyticity and information content},\ }\href {https://doi.org/10.1080/09500349908231307} {\bibfield  {journal} {\bibinfo  {journal} {Journal of Modern Optics}\ }\textbf {\bibinfo {volume} {46}},\ \bibinfo {pages} {825} (\bibinfo {year} {1999})}\BibitemShut {NoStop}%
\bibitem [{\citenamefont {Kirkwood}(1933)}]{Kirkwood_Quantum_1933}%
  \BibitemOpen
  \bibfield  {author} {\bibinfo {author} {\bibfnamefont {J.~G.}\ \bibnamefont {Kirkwood}},\ }\bibfield  {title} {\bibinfo {title} {Quantum statistics of almost classical assemblies},\ }\href {https://doi.org/10.1103/PhysRev.44.31} {\bibfield  {journal} {\bibinfo  {journal} {Phys. Rev.}\ }\textbf {\bibinfo {volume} {44}},\ \bibinfo {pages} {31} (\bibinfo {year} {1933})}\BibitemShut {NoStop}%
\bibitem [{\citenamefont {Arvidsson-Shukur}\ \emph {et~al.}(2024)\citenamefont {Arvidsson-Shukur}, \citenamefont {Braasch~Jr}, \citenamefont {De~Bièvre}, \citenamefont {Dressel}, \citenamefont {Jordan}, \citenamefont {Langrenez}, \citenamefont {Lostaglio}, \citenamefont {Lundeen},\ and\ \citenamefont {Halpern}}]{Arvidsson-Shukur_Properties_2024}%
  \BibitemOpen
  \bibfield  {author} {\bibinfo {author} {\bibfnamefont {D.~R.~M.}\ \bibnamefont {Arvidsson-Shukur}}, \bibinfo {author} {\bibfnamefont {W.~F.}\ \bibnamefont {Braasch~Jr}}, \bibinfo {author} {\bibfnamefont {S.}~\bibnamefont {De~Bièvre}}, \bibinfo {author} {\bibfnamefont {J.}~\bibnamefont {Dressel}}, \bibinfo {author} {\bibfnamefont {A.~N.}\ \bibnamefont {Jordan}}, \bibinfo {author} {\bibfnamefont {C.}~\bibnamefont {Langrenez}}, \bibinfo {author} {\bibfnamefont {M.}~\bibnamefont {Lostaglio}}, \bibinfo {author} {\bibfnamefont {J.~S.}\ \bibnamefont {Lundeen}},\ and\ \bibinfo {author} {\bibfnamefont {N.~Y.}\ \bibnamefont {Halpern}},\ }\bibfield  {title} {\bibinfo {title} {Properties and applications of the kirkwood–dirac distribution},\ }\href {https://doi.org/10.1088/1367-2630/ada05d} {\bibfield  {journal} {\bibinfo  {journal} {New Journal of Physics}\ }\textbf {\bibinfo {volume} {26}},\ \bibinfo {pages} {121201} (\bibinfo {year} {2024})}\BibitemShut {NoStop}%
\bibitem [{\citenamefont {Durt}\ \emph {et~al.}(2010)\citenamefont {Durt}, \citenamefont {Englert}, \citenamefont {Bengtsson},\ and\ \citenamefont {{\.Z}yczkowski}}]{durt_Mutually_2010}%
  \BibitemOpen
  \bibfield  {author} {\bibinfo {author} {\bibfnamefont {T.}~\bibnamefont {Durt}}, \bibinfo {author} {\bibfnamefont {B.-G.}\ \bibnamefont {Englert}}, \bibinfo {author} {\bibfnamefont {I.}~\bibnamefont {Bengtsson}},\ and\ \bibinfo {author} {\bibfnamefont {K.}~\bibnamefont {{\.Z}yczkowski}},\ }\bibfield  {title} {\bibinfo {title} {On mutually unbiased bases},\ }\href {https://doi.org/10.1142/S0219749910006502} {\bibfield  {journal} {\bibinfo  {journal} {International Journal of Quantum Information}\ }\textbf {\bibinfo {volume} {08}},\ \bibinfo {pages} {535} (\bibinfo {year} {2010})}\BibitemShut {NoStop}%
\bibitem [{\citenamefont {Bengtsson}\ and\ \citenamefont {Zyczkowski}(2017)}]{bengtsson_Discrete_2017}%
  \BibitemOpen
  \bibfield  {author} {\bibinfo {author} {\bibfnamefont {I.}~\bibnamefont {Bengtsson}}\ and\ \bibinfo {author} {\bibfnamefont {K.}~\bibnamefont {Zyczkowski}},\ }\href {https://doi.org/10.48550/arXiv.1701.07902} {\bibinfo {title} {On discrete structures in finite {{Hilbert}} spaces}} (\bibinfo {year} {2017}),\ \Eprint {https://arxiv.org/abs/1701.07902} {arXiv:1701.07902 [math-ph, physics:quant-ph]} \BibitemShut {NoStop}%
\bibitem [{\citenamefont {Schwinger}(1960)}]{schwinger_Unitary_1960}%
  \BibitemOpen
  \bibfield  {author} {\bibinfo {author} {\bibfnamefont {J.}~\bibnamefont {Schwinger}},\ }\bibfield  {title} {\bibinfo {title} {Unitary operator bases},\ }\href {https://doi.org/10.1073/pnas.46.4.570} {\bibfield  {journal} {\bibinfo  {journal} {Proceedings of the National Academy of Sciences}\ }\textbf {\bibinfo {volume} {46}},\ \bibinfo {pages} {570} (\bibinfo {year} {1960})}\BibitemShut {NoStop}%
\bibitem [{\citenamefont {C.~V\'{a}rilly}\ and\ \citenamefont {M.~Gracia-Bond\'{i}a}(1989)}]{varilly_moyal_1989}%
  \BibitemOpen
  \bibfield  {author} {\bibinfo {author} {\bibfnamefont {J.}~\bibnamefont {C.~V\'{a}rilly}}\ and\ \bibinfo {author} {\bibfnamefont {J.}~\bibnamefont {M.~Gracia-Bond\'{i}a}},\ }\bibfield  {title} {\bibinfo {title} {The {Moyal} representation for spin},\ }\href {https://doi.org/https://doi.org/10.1016/0003-4916(89)90262-5} {\bibfield  {journal} {\bibinfo  {journal} {Ann. Phys.}\ }\textbf {\bibinfo {volume} {190}},\ \bibinfo {pages} {107} (\bibinfo {year} {1989})}\BibitemShut {NoStop}%
\bibitem [{\citenamefont {Cahen}(2011)}]{Cahen_Stratonovich_2011}%
  \BibitemOpen
  \bibfield  {author} {\bibinfo {author} {\bibfnamefont {B.}~\bibnamefont {Cahen}},\ }\bibfield  {title} {{\selectlanguage {english}\bibinfo {title} {{Stratonovich-Weyl correspondence for discrete series representations}}},\ }\href {{http://eudml.org/doc/116533}} {\bibfield  {journal} {\bibinfo  {journal} {{Archivum Mathematicum}}\ }\textbf {\bibinfo {volume} {047}},\ \bibinfo {pages} {51} (\bibinfo {year} {2011})}\BibitemShut {NoStop}%
\bibitem [{\citenamefont {Ferrie}(2011)}]{ferrie_Quasiprobability_2011}%
  \BibitemOpen
  \bibfield  {author} {\bibinfo {author} {\bibfnamefont {C.}~\bibnamefont {Ferrie}},\ }\bibfield  {title} {\bibinfo {title} {Quasi-probability representations of quantum theory with applications to quantum information science},\ }\href {https://doi.org/10.1088/0034-4885/74/11/116001} {\bibfield  {journal} {\bibinfo  {journal} {Reports on Progress in Physics}\ }\textbf {\bibinfo {volume} {74}},\ \bibinfo {pages} {116001} (\bibinfo {year} {2011})}\BibitemShut {NoStop}%
\bibitem [{\citenamefont {Heiss}\ and\ \citenamefont {Weigert}(2000)}]{Heiss_Discrete_2000}%
  \BibitemOpen
  \bibfield  {author} {\bibinfo {author} {\bibfnamefont {S.}~\bibnamefont {Heiss}}\ and\ \bibinfo {author} {\bibfnamefont {S.}~\bibnamefont {Weigert}},\ }\bibfield  {title} {\bibinfo {title} {Discrete moyal-type representations for a spin},\ }\href {https://doi.org/10.1103/PhysRevA.63.012105} {\bibfield  {journal} {\bibinfo  {journal} {Phys. Rev. A}\ }\textbf {\bibinfo {volume} {63}},\ \bibinfo {pages} {012105} (\bibinfo {year} {2000})}\BibitemShut {NoStop}%
\bibitem [{\citenamefont {Stratonovich}(1957)}]{stratonovich_Distributions_1957}%
  \BibitemOpen
  \bibfield  {author} {\bibinfo {author} {\bibfnamefont {R.}~\bibnamefont {Stratonovich}},\ }\bibfield  {title} {\bibinfo {title} {On distributions in representation space},\ }\href {http://jetp.ras.ru/cgi-bin/e/index/e/4/6/p891?a=list} {\bibfield  {journal} {\bibinfo  {journal} {Soviet Physics JETP}\ }\textbf {\bibinfo {volume} {4}},\ \bibinfo {pages} {891} (\bibinfo {year} {1957})}\BibitemShut {NoStop}%
\bibitem [{\citenamefont {Brif}\ and\ \citenamefont {Mann}(1998)}]{brif_General_1998}%
  \BibitemOpen
  \bibfield  {author} {\bibinfo {author} {\bibfnamefont {C.}~\bibnamefont {Brif}}\ and\ \bibinfo {author} {\bibfnamefont {A.}~\bibnamefont {Mann}},\ }\bibfield  {title} {\bibinfo {title} {A general theory of phase-space quasiprobability distributions},\ }\href {https://doi.org/10.1088/0305-4470/31/1/002} {\bibfield  {journal} {\bibinfo  {journal} {Journal of Physics A: Mathematical and General}\ }\textbf {\bibinfo {volume} {31}},\ \bibinfo {pages} {L9} (\bibinfo {year} {1998})}\BibitemShut {NoStop}%
\bibitem [{\citenamefont {Wootters}(1987)}]{wootters_Wignerfunction_1987}%
  \BibitemOpen
  \bibfield  {author} {\bibinfo {author} {\bibfnamefont {W.~K.}\ \bibnamefont {Wootters}},\ }\bibfield  {title} {\bibinfo {title} {A {{Wigner-function}} formulation of finite-state quantum mechanics},\ }\href {https://doi.org/10.1016/0003-4916(87)90176-X} {\bibfield  {journal} {\bibinfo  {journal} {Annals of Physics}\ }\textbf {\bibinfo {volume} {176}},\ \bibinfo {pages} {1} (\bibinfo {year} {1987})}\BibitemShut {NoStop}%
\bibitem [{\citenamefont {K.}\ and\ \citenamefont {Raghavan}(2018)}]{Srinivasan_Generalized_2018}%
  \BibitemOpen
  \bibfield  {author} {\bibinfo {author} {\bibfnamefont {S.}~\bibnamefont {K.}}\ and\ \bibinfo {author} {\bibfnamefont {G.}~\bibnamefont {Raghavan}},\ }\bibfield  {title} {\bibinfo {title} {Generalized {{Reduction Formula}} for {{Discrete Wigner Functions}} of {{Multiqubit Systems}}},\ }\bibfield  {journal} {\bibinfo  {journal} {International Journal of Theoretical Physics}\ }\textbf {\bibinfo {volume} {57}},\ \href {https://doi.org/10.1007/s10773-017-3615-0} {10.1007/s10773-017-3615-0} (\bibinfo {year} {2018})\BibitemShut {NoStop}%
\bibitem [{\citenamefont {Ivonovic}(1981)}]{ivonovic1981geometrical}%
  \BibitemOpen
  \bibfield  {author} {\bibinfo {author} {\bibfnamefont {I.~D.}\ \bibnamefont {Ivonovic}},\ }\bibfield  {title} {\bibinfo {title} {Geometrical description of quantal state determination},\ }\href {https://doi.org/10.1088/0305-4470/14/12/019} {\bibfield  {journal} {\bibinfo  {journal} {Journal of Physics A: Mathematical and General}\ }\textbf {\bibinfo {volume} {14}},\ \bibinfo {pages} {3241} (\bibinfo {year} {1981})}\BibitemShut {NoStop}%
\bibitem [{\citenamefont {Hannay}\ and\ \citenamefont {Berry}(1980)}]{hannay_Quantization_1980}%
  \BibitemOpen
  \bibfield  {author} {\bibinfo {author} {\bibfnamefont {J.~H.}\ \bibnamefont {Hannay}}\ and\ \bibinfo {author} {\bibfnamefont {M.~V.}\ \bibnamefont {Berry}},\ }\bibfield  {title} {\bibinfo {title} {Quantization of linear maps on a torus-fresnel diffraction by a periodic grating},\ }\href {https://doi.org/10.1016/0167-2789(80)90026-3} {\bibfield  {journal} {\bibinfo  {journal} {Physica D: Nonlinear Phenomena}\ }\textbf {\bibinfo {volume} {1}},\ \bibinfo {pages} {267} (\bibinfo {year} {1980})}\BibitemShut {NoStop}%
\bibitem [{\citenamefont {Arg{\"u}elles}\ and\ \citenamefont {Dittrich}(2005)}]{arguelles_Wigner_2005}%
  \BibitemOpen
  \bibfield  {author} {\bibinfo {author} {\bibfnamefont {A.}~\bibnamefont {Arg{\"u}elles}}\ and\ \bibinfo {author} {\bibfnamefont {T.}~\bibnamefont {Dittrich}},\ }\bibfield  {title} {\bibinfo {title} {Wigner function for discrete phase space: {{Exorcising}} ghost images},\ }\href {https://doi.org/10.1016/j.physa.2005.05.015} {\bibfield  {journal} {\bibinfo  {journal} {Physica A: Statistical Mechanics and its Applications}\ }\bibinfo {series} {Nonequilibrium {{Statistical Mechanics}} and {{Nonlinear Physics}} ({{MEDYFINOL}}'04)},\ \textbf {\bibinfo {volume} {356}},\ \bibinfo {pages} {72} (\bibinfo {year} {2005})}\BibitemShut {NoStop}%
\bibitem [{\citenamefont {Morgan-Tracy}(2002)}]{Morgan_Tracy_thesis}%
  \BibitemOpen
  \bibfield  {author} {\bibinfo {author} {\bibfnamefont {M.}~\bibnamefont {Morgan-Tracy}},\ }\emph {\bibinfo {title} {{The Classical Limit of the Quantum Baker's Map}}},\ \href@noop {} {\bibinfo {type} {Phd thesis}},\ \bibinfo  {school} {University of New Mexico} (\bibinfo {year} {2002}),\ \bibinfo {note} {available at \url{https://digitalrepository.unm.edu/phyc_etds/79}}\BibitemShut {NoStop}%
\bibitem [{\citenamefont {Hahn}\ \emph {et~al.}(2022)\citenamefont {Hahn}, \citenamefont {Ferraro}, \citenamefont {Hultquist}, \citenamefont {Ferrini},\ and\ \citenamefont {{Garc{\'i}a-{\'A}lvarez}}}]{hahn_Quantifying_2022}%
  \BibitemOpen
  \bibfield  {author} {\bibinfo {author} {\bibfnamefont {O.}~\bibnamefont {Hahn}}, \bibinfo {author} {\bibfnamefont {A.}~\bibnamefont {Ferraro}}, \bibinfo {author} {\bibfnamefont {L.}~\bibnamefont {Hultquist}}, \bibinfo {author} {\bibfnamefont {G.}~\bibnamefont {Ferrini}},\ and\ \bibinfo {author} {\bibfnamefont {L.}~\bibnamefont {{Garc{\'i}a-{\'A}lvarez}}},\ }\bibfield  {title} {\bibinfo {title} {Quantifying {{Qubit Magic Resource}} with {{Gottesman-Kitaev-Preskill Encoding}}},\ }\href {https://doi.org/10.1103/PhysRevLett.128.210502} {\bibfield  {journal} {\bibinfo  {journal} {Physical Review Letters}\ }\textbf {\bibinfo {volume} {128}},\ \bibinfo {pages} {210502} (\bibinfo {year} {2022})}\BibitemShut {NoStop}%
\bibitem [{\citenamefont {Feng}\ and\ \citenamefont {Luo}(2024)}]{feng_Connecting_2024}%
  \BibitemOpen
  \bibfield  {author} {\bibinfo {author} {\bibfnamefont {L.}~\bibnamefont {Feng}}\ and\ \bibinfo {author} {\bibfnamefont {S.}~\bibnamefont {Luo}},\ }\bibfield  {title} {\bibinfo {title} {Connecting {{Continuous}} and {{Discrete Wigner Functions Via GKP Encoding}}},\ }\href {https://doi.org/10.1007/s10773-024-05549-3} {\bibfield  {journal} {\bibinfo  {journal} {International Journal of Theoretical Physics}\ }\textbf {\bibinfo {volume} {63}},\ \bibinfo {pages} {40} (\bibinfo {year} {2024})}\BibitemShut {NoStop}%
\bibitem [{\citenamefont {Gottesman}\ \emph {et~al.}(2001)\citenamefont {Gottesman}, \citenamefont {Kitaev},\ and\ \citenamefont {Preskill}}]{gottesman_Encoding_2001}%
  \BibitemOpen
  \bibfield  {author} {\bibinfo {author} {\bibfnamefont {D.}~\bibnamefont {Gottesman}}, \bibinfo {author} {\bibfnamefont {A.}~\bibnamefont {Kitaev}},\ and\ \bibinfo {author} {\bibfnamefont {J.}~\bibnamefont {Preskill}},\ }\bibfield  {title} {\bibinfo {title} {Encoding a qubit in an oscillator},\ }\href {https://doi.org/10.1103/PhysRevA.64.012310} {\bibfield  {journal} {\bibinfo  {journal} {Physical Review A}\ }\textbf {\bibinfo {volume} {64}},\ \bibinfo {pages} {012310} (\bibinfo {year} {2001})}\BibitemShut {NoStop}%
\bibitem [{\citenamefont {Cohendet}\ \emph {et~al.}(1988)\citenamefont {Cohendet}, \citenamefont {Combe}, \citenamefont {Sirugue},\ and\ \citenamefont {{Sirugue-Collin}}}]{cohendet_Stochastic_1988}%
  \BibitemOpen
  \bibfield  {author} {\bibinfo {author} {\bibfnamefont {O.}~\bibnamefont {Cohendet}}, \bibinfo {author} {\bibfnamefont {P.}~\bibnamefont {Combe}}, \bibinfo {author} {\bibfnamefont {M.}~\bibnamefont {Sirugue}},\ and\ \bibinfo {author} {\bibfnamefont {M.}~\bibnamefont {{Sirugue-Collin}}},\ }\bibfield  {title} {\bibinfo {title} {A stochastic treatment of the dynamics of an integer spin},\ }\href {https://doi.org/10.1088/0305-4470/21/13/012} {\bibfield  {journal} {\bibinfo  {journal} {Journal of Physics A: Mathematical and General}\ }\textbf {\bibinfo {volume} {21}},\ \bibinfo {pages} {2875} (\bibinfo {year} {1988})}\BibitemShut {NoStop}%
\bibitem [{\citenamefont {Fano}(1957)}]{fano_Description_1957}%
  \BibitemOpen
  \bibfield  {author} {\bibinfo {author} {\bibfnamefont {U.}~\bibnamefont {Fano}},\ }\bibfield  {title} {\bibinfo {title} {Description of {{States}} in {{Quantum Mechanics}} by {{Density Matrix}} and {{Operator Techniques}}},\ }\href {https://doi.org/10.1103/RevModPhys.29.74} {\bibfield  {journal} {\bibinfo  {journal} {Reviews of Modern Physics}\ }\textbf {\bibinfo {volume} {29}},\ \bibinfo {pages} {74} (\bibinfo {year} {1957})}\BibitemShut {NoStop}%
\bibitem [{\citenamefont {Howard}\ and\ \citenamefont {Campbell}(2017)}]{Howard_Campbell_RoM_2017}%
  \BibitemOpen
  \bibfield  {author} {\bibinfo {author} {\bibfnamefont {M.}~\bibnamefont {Howard}}\ and\ \bibinfo {author} {\bibfnamefont {E.}~\bibnamefont {Campbell}},\ }\bibfield  {title} {\bibinfo {title} {Application of a resource theory for magic states to fault-tolerant quantum computing},\ }\href {https://doi.org/10.1103/PhysRevLett.118.090501} {\bibfield  {journal} {\bibinfo  {journal} {Phys. Rev. Lett.}\ }\textbf {\bibinfo {volume} {118}},\ \bibinfo {pages} {090501} (\bibinfo {year} {2017})}\BibitemShut {NoStop}%
\bibitem [{\citenamefont {Leone}\ \emph {et~al.}(2022)\citenamefont {Leone}, \citenamefont {Oliviero},\ and\ \citenamefont {Hamma}}]{Leon_Stabilizer_2022}%
  \BibitemOpen
  \bibfield  {author} {\bibinfo {author} {\bibfnamefont {L.}~\bibnamefont {Leone}}, \bibinfo {author} {\bibfnamefont {S.~F.~E.}\ \bibnamefont {Oliviero}},\ and\ \bibinfo {author} {\bibfnamefont {A.}~\bibnamefont {Hamma}},\ }\bibfield  {title} {\bibinfo {title} {Stabilizer r\'enyi entropy},\ }\href {https://doi.org/10.1103/PhysRevLett.128.050402} {\bibfield  {journal} {\bibinfo  {journal} {Phys. Rev. Lett.}\ }\textbf {\bibinfo {volume} {128}},\ \bibinfo {pages} {050402} (\bibinfo {year} {2022})}\BibitemShut {NoStop}%
\bibitem [{\citenamefont {Zhu}(2016)}]{Zhu_permutation2016}%
  \BibitemOpen
  \bibfield  {author} {\bibinfo {author} {\bibfnamefont {H.}~\bibnamefont {Zhu}},\ }\bibfield  {title} {\bibinfo {title} {Permutation symmetry determines the discrete wigner function},\ }\href {https://doi.org/10.1103/PhysRevLett.116.040501} {\bibfield  {journal} {\bibinfo  {journal} {Phys. Rev. Lett.}\ }\textbf {\bibinfo {volume} {116}},\ \bibinfo {pages} {040501} (\bibinfo {year} {2016})}\BibitemShut {NoStop}%
\bibitem [{\citenamefont {Veitch}\ \emph {et~al.}(2012)\citenamefont {Veitch}, \citenamefont {Ferrie}, \citenamefont {Gross},\ and\ \citenamefont {Emerson}}]{Veitch_Ferrie_Gross_Emerson_2012}%
  \BibitemOpen
  \bibfield  {author} {\bibinfo {author} {\bibfnamefont {V.}~\bibnamefont {Veitch}}, \bibinfo {author} {\bibfnamefont {C.}~\bibnamefont {Ferrie}}, \bibinfo {author} {\bibfnamefont {D.}~\bibnamefont {Gross}},\ and\ \bibinfo {author} {\bibfnamefont {J.}~\bibnamefont {Emerson}},\ }\bibfield  {title} {\bibinfo {title} {Negative quasi-probability as a resource for quantum computation},\ }\href {https://doi.org/10.1088/1367-2630/14/11/113011} {\bibfield  {journal} {\bibinfo  {journal} {New Journal of Physics}\ }\textbf {\bibinfo {volume} {14}},\ \bibinfo {pages} {113011} (\bibinfo {year} {2012})}\BibitemShut {NoStop}%
\bibitem [{\citenamefont {Kocia}\ \emph {et~al.}(2017)\citenamefont {Kocia}, \citenamefont {Huang},\ and\ \citenamefont {Love}}]{Kocia2017}%
  \BibitemOpen
  \bibfield  {author} {\bibinfo {author} {\bibfnamefont {L.}~\bibnamefont {Kocia}}, \bibinfo {author} {\bibfnamefont {Y.}~\bibnamefont {Huang}},\ and\ \bibinfo {author} {\bibfnamefont {P.}~\bibnamefont {Love}},\ }\bibfield  {title} {\bibinfo {title} {Discrete {Wigner} function derivation of the {Aaronson-Gottesman} tableau algorithm},\ }\href {https://www.mdpi.com/1099-4300/19/7/353} {\bibfield  {journal} {\bibinfo  {journal} {Entropy}\ }\textbf {\bibinfo {volume} {19}} (\bibinfo {year} {2017})}\BibitemShut {NoStop}%
\bibitem [{\citenamefont {Park}\ \emph {et~al.}(2024)\citenamefont {Park}, \citenamefont {Kwon},\ and\ \citenamefont {Jeong}}]{Park_Extending_2024}%
  \BibitemOpen
  \bibfield  {author} {\bibinfo {author} {\bibfnamefont {G.}~\bibnamefont {Park}}, \bibinfo {author} {\bibfnamefont {H.}~\bibnamefont {Kwon}},\ and\ \bibinfo {author} {\bibfnamefont {H.}~\bibnamefont {Jeong}},\ }\bibfield  {title} {\bibinfo {title} {Extending classically simulatable bounds of clifford circuits with nonstabilizer states via framed wigner functions},\ }\href {https://doi.org/10.1103/PhysRevLett.133.220601} {\bibfield  {journal} {\bibinfo  {journal} {Phys. Rev. Lett.}\ }\textbf {\bibinfo {volume} {133}},\ \bibinfo {pages} {220601} (\bibinfo {year} {2024})}\BibitemShut {NoStop}%
\bibitem [{\citenamefont {Cohen}(1966)}]{cohen_Generalized_1966}%
  \BibitemOpen
  \bibfield  {author} {\bibinfo {author} {\bibfnamefont {L.}~\bibnamefont {Cohen}},\ }\bibfield  {title} {\bibinfo {title} {Generalized {{Phase-Space Distribution Functions}}},\ }\href {https://doi.org/10.1063/1.1931206} {\bibfield  {journal} {\bibinfo  {journal} {Journal of Mathematical Physics}\ }\textbf {\bibinfo {volume} {7}},\ \bibinfo {pages} {781} (\bibinfo {year} {1966})}\BibitemShut {NoStop}%
\bibitem [{\citenamefont {Wu}\ and\ \citenamefont {Morris}(1994)}]{wu1994discrete}%
  \BibitemOpen
  \bibfield  {author} {\bibinfo {author} {\bibfnamefont {D.}~\bibnamefont {Wu}}\ and\ \bibinfo {author} {\bibfnamefont {J.~M.}\ \bibnamefont {Morris}},\ }\bibfield  {title} {\bibinfo {title} {Discrete {Cohen's} class of distributions},\ }in\ \href {https://doi.org/10.1109/TFSA.1994.467297} {\emph {\bibinfo {booktitle} {Proceedings of IEEE-SP International Symposium on Time-Frequency and Time-Scale Analysis}}}\ (\bibinfo {organization} {IEEE},\ \bibinfo {year} {1994})\ pp.\ \bibinfo {pages} {532--535}\BibitemShut {NoStop}%
\end{thebibliography}%

\end{document}